\theoremstyle{plain}
\newtheorem{theorem}{Theorem}[section]
\newtheorem{lemma}[theorem]{Lemma}
\newtheorem{corollary}[theorem]{Corollary}
\theoremstyle{definition}
\newtheorem{definition}[theorem]{Definition}
\newtheorem{example}[theorem]{Example}
\theoremstyle{remark}
\newtheorem{remark}[theorem]{Remark}
\newcommand{\Z}{\mathbb{Z}}
\newcommand{\R}{\mathbb{R}}
\newcommand{\RR}{\mathrm{R}}
\newcommand{\RP}{\mathbb{R}\mathit{P}}
\newcommand{\A}{\mathcal{A}}
\renewcommand{\SS}{\mathcal{S}}
\newcommand{\col}{\mathrm{col}}
\newcommand{\fg}{\mathrm{fg}}
\newcommand{\incl}{\mathrm{incl}}
\newcommand{\conj}{\mathrm{conj}}
\newcommand{\SO}{\mathrm{SO}}
\newcommand{\SU}{\mathrm{SU}}
\newcommand{\BTet}{2T}
\newcommand{\BOct}{2O}
\renewcommand{\Im}{\operatorname{Im}}
\newcommand{\Int}{\operatorname{Int}}
\newcommand{\Map}{\mathrm{Map}}
\newcommand{\Hom}{\mathrm{Hom}}
\newcommand{\ang}[1]{\langle#1\rangle}
\newcommand{\angg}[1]{\langle\!\langle#1\rangle\!\rangle}
\newcommand{\defeq}{\vcentcolon=}
\title{Homotopy classification of knotted defects in bounded domains}
\author[1,4]{Yuta Nozaki}
\author[2]{David Palmer}
\author[3,4]{Yuya Koda}
\affil[1]{\footnotesize Faculty of Environment and Information Sciences, Yokohama National University, Yokohama 240-8501, Japan;
\texttt{nozaki-yuta-vn@ynu.ac.jp}}
\affil[2]{\footnotesize School of Engineering and Applied Sciences, Harvard University, Cambridge, Massachusetts 02138, USA;
\texttt{dpalmer@seas.harvard.edu}}
\affil[3]{\footnotesize Department of Mathematics, Hiyoshi Campus, Keio University, Yokohama 223-8521, Japan;
\texttt{koda@keio.jp}}
\affil[4]{\footnotesize International Institute for Sustainability with Knotted Chiral Meta Matter (WPI-SKCM$^2$), Hiroshima University, 1-3-1 Kagamiyama, Higashi-Hiroshima, Hiroshima 739-8526, Japan}
\begin{document}
\date{}
\maketitle

\begin{abstract}
Nozaki et.~al.\ gave a homotopy classification of the knotted defects of ordered media in three-dimensional space by considering continuous maps from complements of spatial graphs to the order parameter space modulo a certain equivalence relation.
We extend their result by giving a classification scheme for ordered media in handlebodies, where defects are allowed to reach the boundary.
Through monodromies around meridional loops, global defects are described in terms of planar diagrams whose edges are colored by elements of the fundamental group of the order parameter space.
We exhibit examples of this classification in octahedral frame fields and biaxial nematic liquid crystals.
\end{abstract}

%%%%
\section{Introduction}
\label{sec:intro}

The relationship between ordered media and their topological defects is a central object of study in condensed matter physics. Though the classification of individual defects via homotopy theory has been understood for decades \cite{PoTo77,Poe81,Mer79,NHM88,VoMi77}, the global classification of defect \emph{configurations} in three dimensions remains a challenge. 
To put it simply, the problem is, given a pair of defect configurations, to determine whether they are equivalent under transformations arising from the physical evolution of the underlying order parameter field. These allowed transformations include, at a minimum, isotopy of the defects.

The most-studied case is that of \emph{nematic} symmetry, as observed in liquid crystals. For three-dimensional nematics, the order parameter space is $\RP^2$, which has fundamental group $\Z_2=\Z/2\Z$. 
For individual line defects, this is perhaps the simplest possible case. 
However, the search for a global classification is made more complicated by the fact that $\pi_2(\RP^2)$ is non-zero \cite{MaAl14,MaAl16}. More-exotic ordered media, such as the hypothetical biaxial nematics \cite{tschierskeBiaxialNematicPhases2010}, feature defect lines classified by elements of a non-abelian group. This renders the global classification problem more difficult---for one thing, individual defect labels are only defined up to conjugation.

Outside of soft-matter physics, non-abelian topological defect configurations arise in the hexahedral meshing problem in geometry processing and computational engineering. This is the problem of decomposing a domain in $\R^3$ into topological cubes satisfying certain combinatorial conditions. 
Following related work in two dimensions, \cite{nieserCubeCoverParameterization3D2011} proposed a method for hexahedral meshing of volumes by first computing an octahedral bundle with defects (known in the meshing literature as singularities). Such a bundle can be encoded by an order parameter field with octahedral symmetry, a so-called \emph{octahedral frame field}~\cite{huangBoundaryAlignedSmooth2011} (see \Cref{fig:octa-examples} and Appendix~\ref{sec: The subgroups of the binary octahedral group}). 
To that end, a line of research in geometry processing has studied discretizing, parametrizing, and optimizing such fields \cite{liAllhexMeshingUsing2012,jiangFrameFieldSingularity2014,shenHarmonicFunctionsRotational2016,rayPractical3DFrame2016,cheminRepresentingThreeDimensionalCross2019,PBS20,golovatyVariationalMethodGenerating2021}.

However, some octahedral field defect configurations are known to be incompatible with hexahedral meshing, necessitating the development of algorithms to correct singularity graphs and associated frame fields \cite{liuSingularityconstrainedOctahedralFields2018,cormanSymmetricMovingFrames2019,liuLocallyMeshableFrame2023,coiffierMethodMovingFrames2023}. Thus far these methods only operate at the level of individual defect lines. 
As such, they cannot guarantee global meshability, i.e., the realizability of a defect configuration in a hexahedral mesh. To make theoretical progress on this problem would require a better understanding of the global topology of defect configurations in both the field and mesh settings.
\begin{figure}
\centering
\newcommand{\imgwidth}{0.3\textwidth}
\begin{tabular}{ccc}
\includegraphics[width=\imgwidth]{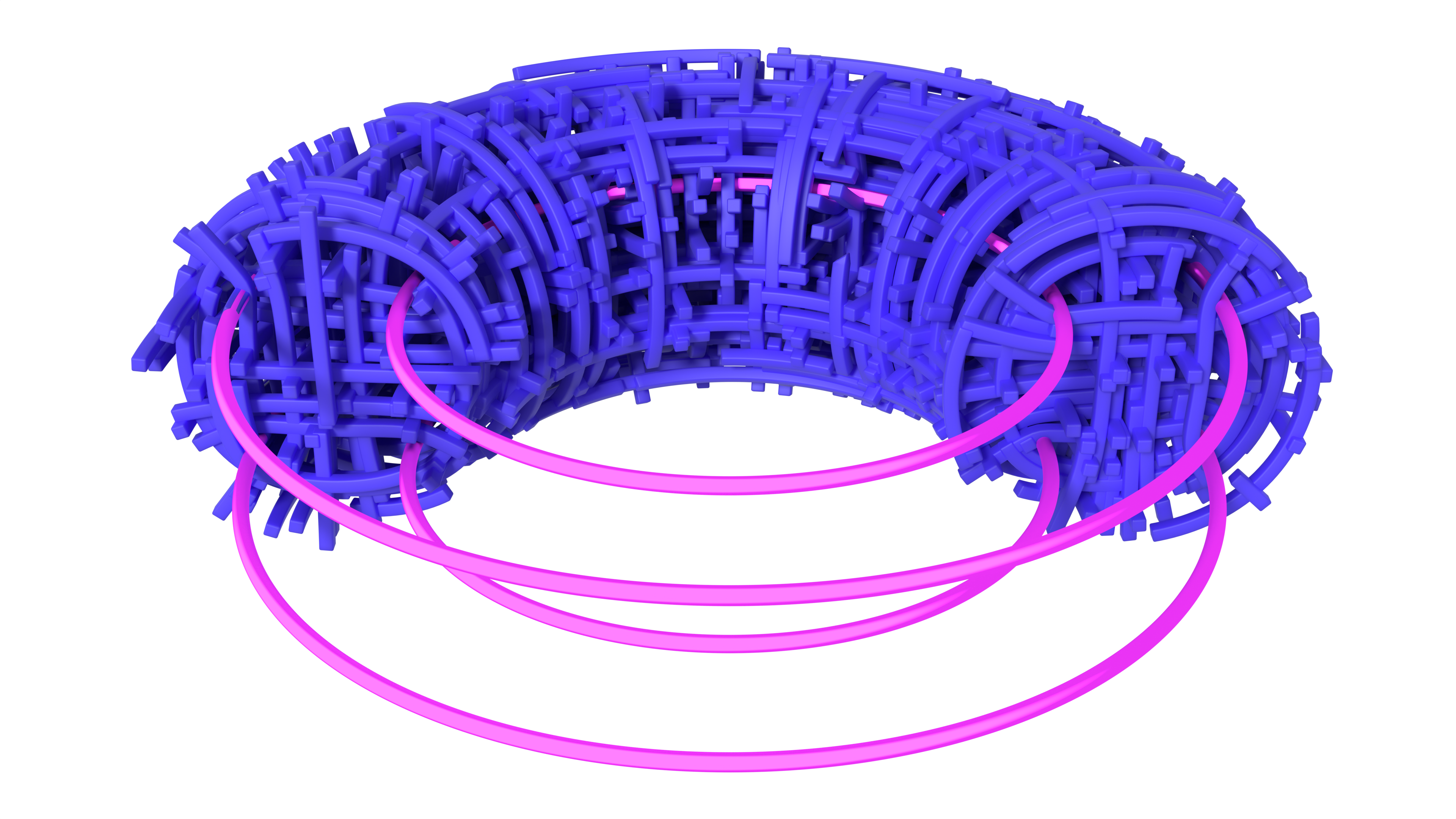} &
\includegraphics[width=\imgwidth]{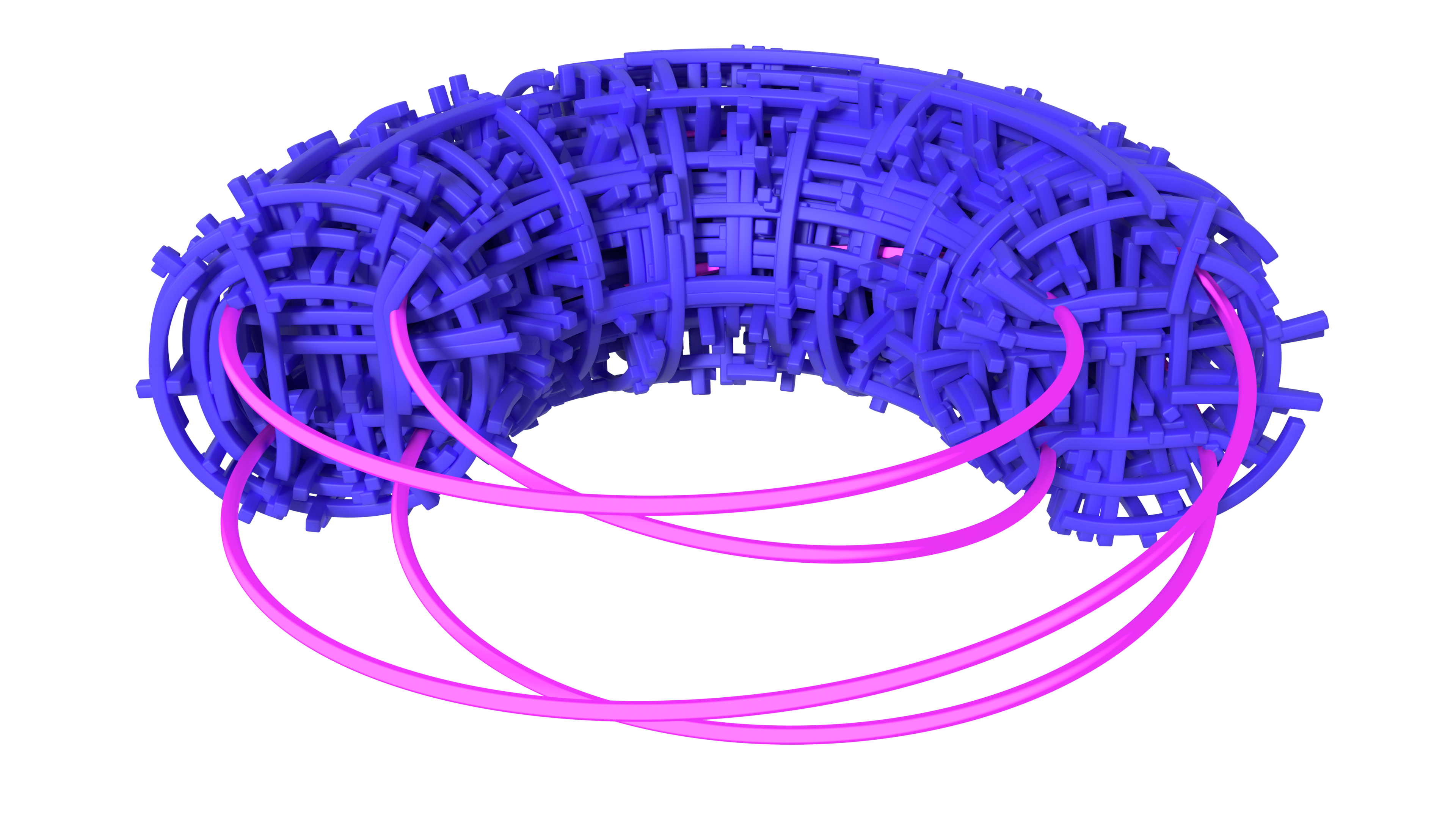} &
\includegraphics[width=\imgwidth]{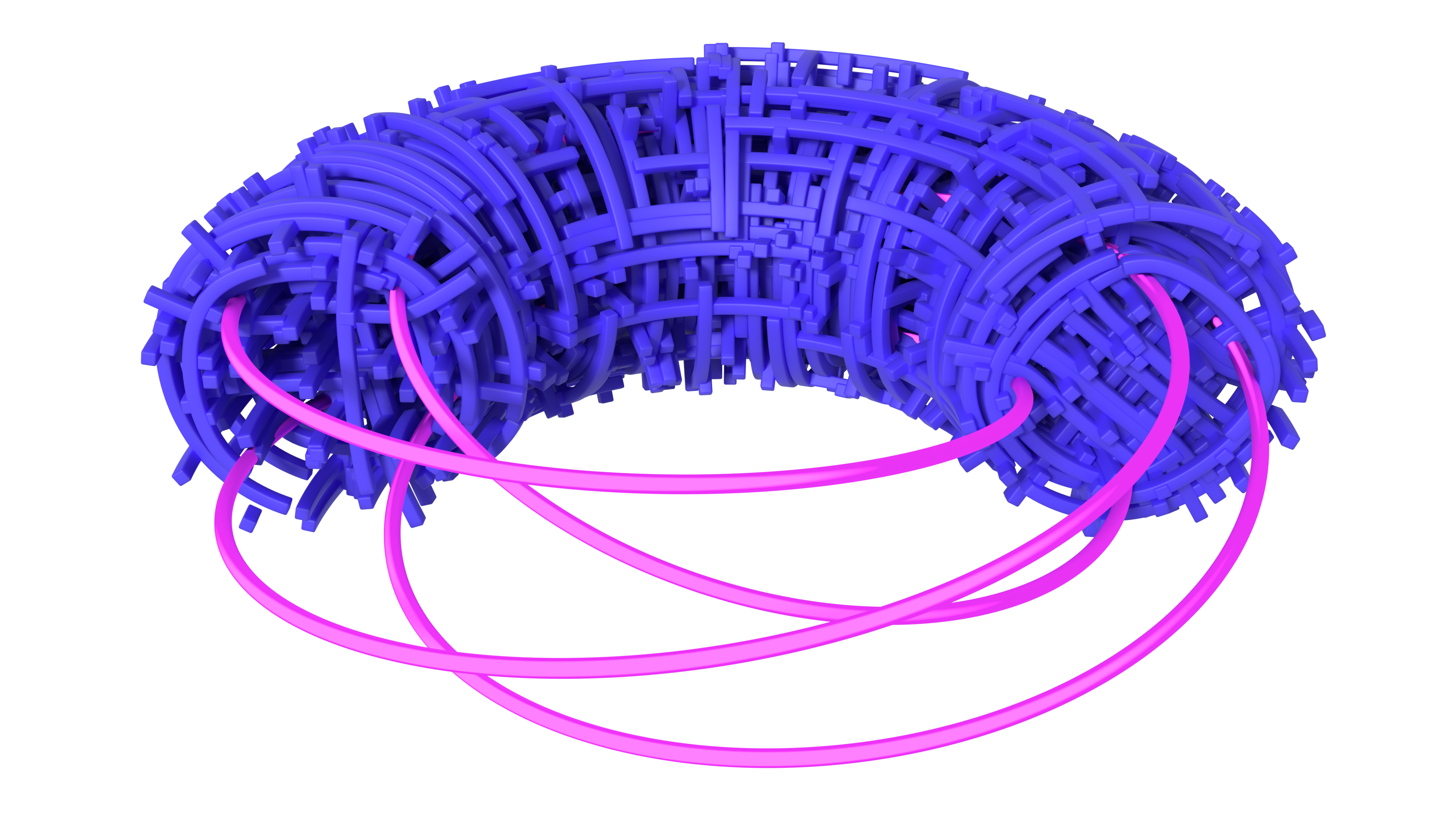} \\
\includegraphics[width=\imgwidth]{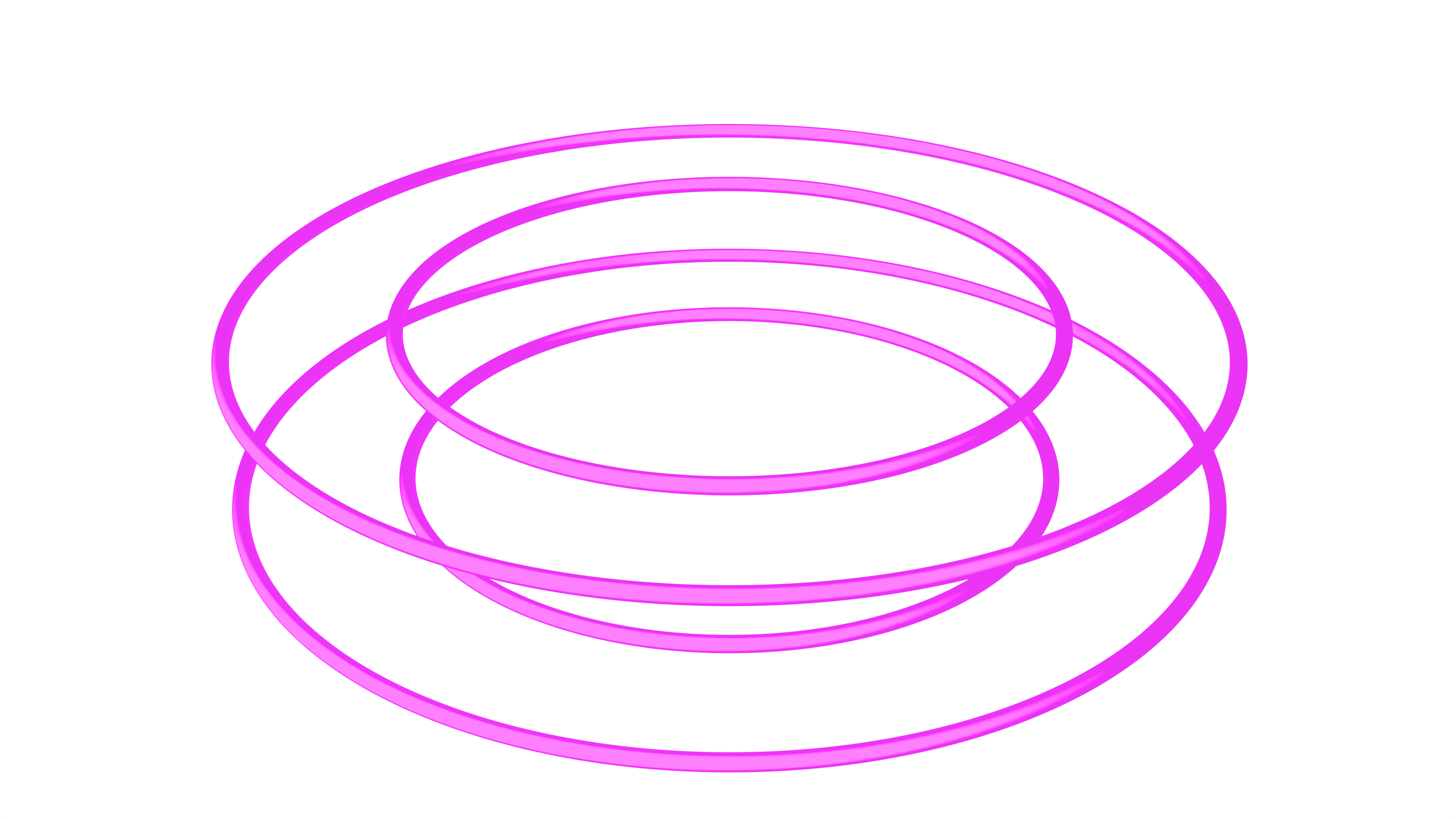} &
\includegraphics[width=\imgwidth]{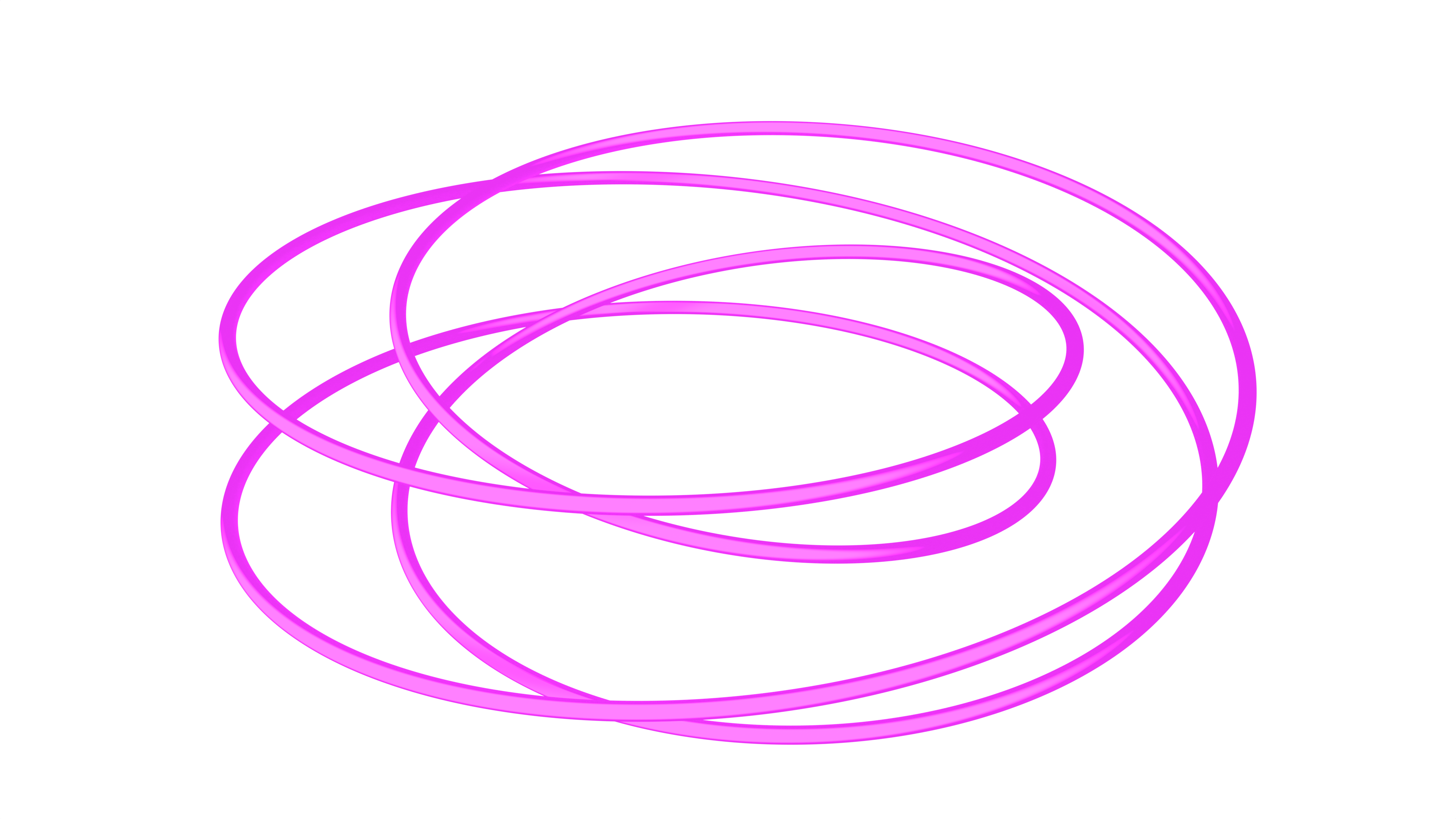} &
\includegraphics[width=\imgwidth]{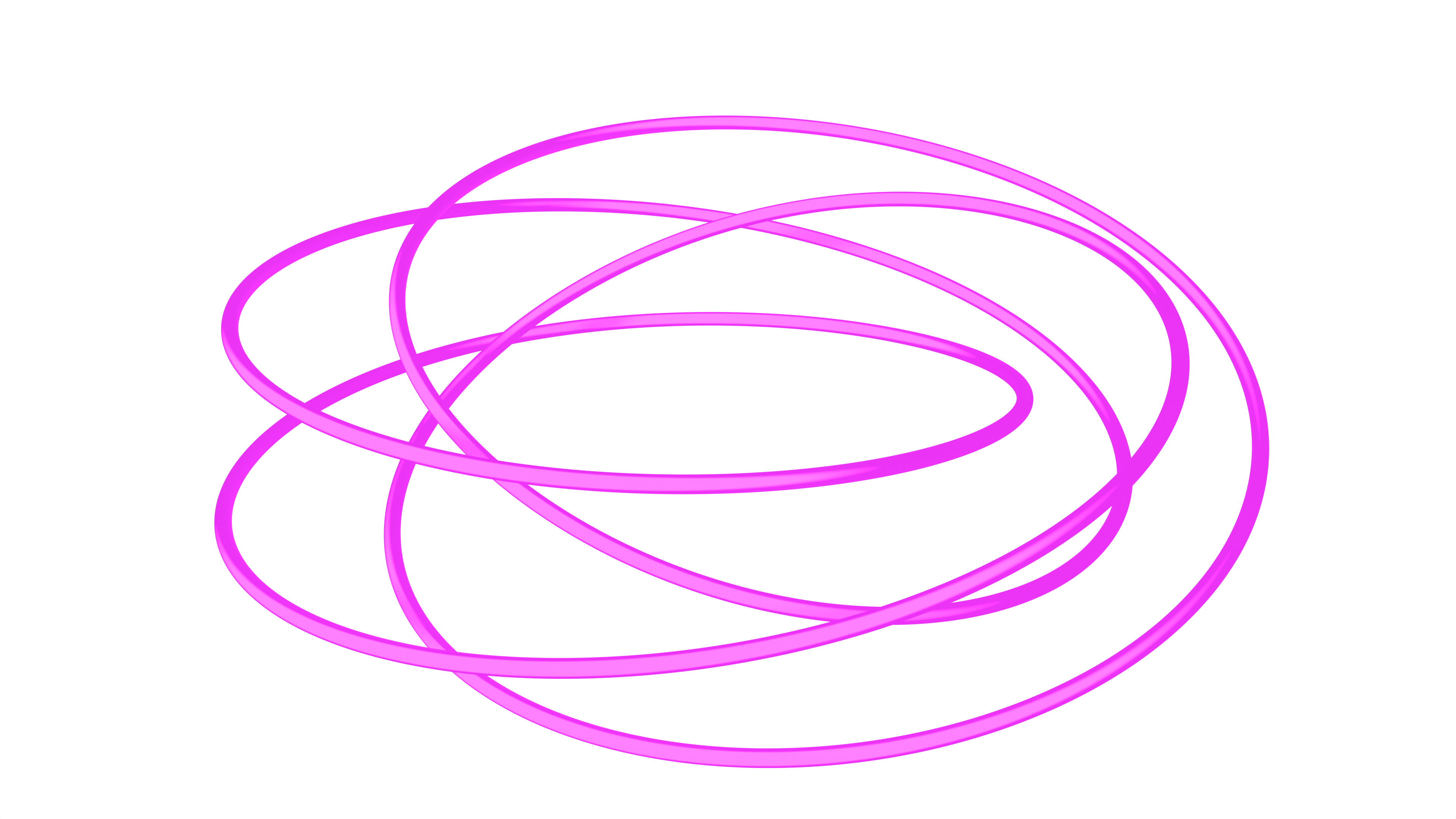} \\
$(4, 0)$ & $(4, 2)$ & $(4, 3)$
\end{tabular}
\caption{Integral curves and defects of three different octahedral frame fields on the solid torus. Their defects---all of conjugacy class $\left[(1 + i)/\sqrt{2}\right] \subset 2O$---form a $4$-component unlink (left), a $(4, 2)$-torus link (center), and a $(4, 3)$-torus knot (right).}
\label{fig:octa-examples}
\end{figure}

In applications to both liquid crystal physics and meshing, interactions between defects and the domain boundary are significant. Experimentalists have used anchoring of liquid crystals to substrates to control defects \cite{taiSurfaceAnchoringControl2020}, interacting colloidal particles \cite{senyukDesignPreparationNematic2022}, and even micro-robots \cite{yaoNematicColloidalMicroRobots2022}. 
In meshing, the problem instance is specified by the boundary geometry, and correct discretization of the boundary is vitally important for downstream applications of hexahedral meshes. Despite these practical considerations, boundary conditions complicate the classification of defect configurations, and most existing work on classification has focused on the boundary-free setting, i.e., defects in the $3$-sphere.

Topological invariants that distinguish global defect configurations have been proposed in \cite{holzTopologicalPropertiesStatic1992,AZM22,ARM24}. 
However, these invariants are not known to be sufficient to completely classify configurations. To our knowledge, the first work in that direction is by \textcite{NKTK24}, who give a homotopy classification of knotted defects either in the Euclidean space $\R^3$, the $3$-sphere $S^3$, or the closed $3$-ball.

The present work marks a first step toward a global classification in domains with boundary. We extend the tools of \cite{NKTK24} to the setting of defects in unknotted handlebodies, subject to some boundary conditions. 
More precisely, we first formulate a mathematical model for ordered media with defects on a handlebody with boundary conditions and homotopical equivalence relations between them.
We then define certain algebraic data derived from the defect configurations. 
On the other hand, the equivalence classes of ordered media can be described using \emph{colored diagrams} and their moves.
Our main result, Theorem~\ref{thm:top_classif}, 
gives natural bijective correspondences between ordered media with boundary conditions modulo equivalence; 
colored diagrams up to moves; 
and the algebraic data. 
This allows us to classify defect configurations modulo equivalence using purely algebraic data that can be derived from their diagrams.  

For concreteness, we illustrate our classification in the case of a biaxial nematic system on the solid torus $M \defeq D^2 \times S^1$ in Appendix~\ref{sec: Classification example}. 
The order parameter space of that system is $S^3 / Q$, where 
$Q = \{ \pm 1 , \pm i, \pm j, \pm k \}$ is the \emph{quaternion group}, and 
the fundamental group of the boundary $\partial M$ of $M$, which is a torus, is the free abelian group generated by two loops $\alpha$ and $\beta$ called the 
\emph{longitude} and \emph{meridian}, respectively. 
If we restrict our interest to systems without boundary defects, and further, if we impose a boundary condition that the restriction $f_0\colon \partial M \to S^3 / Q$ satisfies $(f_0)_* (\alpha) = i \in Q$, then any such system is equivalent
to exactly one of the twelve models depicted in \Cref{fig:example_classification_intro}, 
whose defect graphs are indicated in black and labeled with monodromies around edges.
\begin{figure}[htbp]
\centering\includegraphics[width=0.95\textwidth]{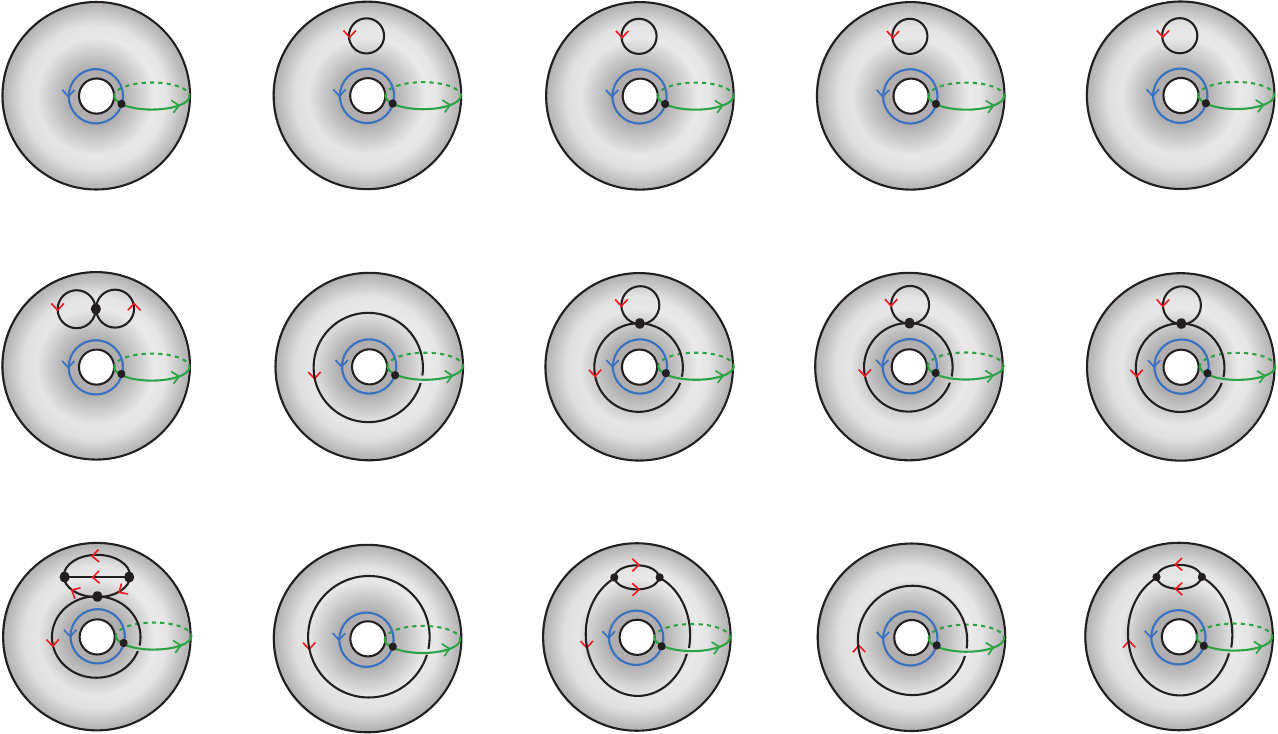}
\begin{picture}(400,0)(0,0)

\put(24,212){\footnotesize \color{blue}$\alpha$}
\put(60,212){\footnotesize \color{teal}$\beta$}

\put(95,231){\footnotesize \color{red}$-1$}
\put(110,212){\footnotesize \color{blue}$\alpha$}
\put(146,212){\footnotesize \color{teal}$\beta$}

\put(188,231){\footnotesize \color{red}$i$}
\put(196,212){\footnotesize \color{blue}$\alpha$}
\put(232,212){\footnotesize \color{teal}$\beta$}

\put(273,231){\footnotesize \color{red}$j$}
\put(282,212){\footnotesize \color{blue}$\alpha$}
\put(318,212){\footnotesize \color{teal}$\beta$}

\put(358,231){\footnotesize \color{red}$k$}
\put(368,212){\footnotesize \color{blue}$\alpha$}
\put(404,212){\footnotesize \color{teal}$\beta$}

\put(24,126){\footnotesize \color{blue}$\alpha$}
\put(60,126){\footnotesize \color{teal}$\beta$}
\put(10,145){\footnotesize \color{red}$j$}
\put(42,145){\footnotesize \color{red}$k$}

\put(110,126){\footnotesize \color{blue}$\alpha$}
\put(146,126){\footnotesize \color{teal}$\beta$}
\put(85,123){\footnotesize \color{red}$-1$}

\put(196,126){\footnotesize \color{blue}$\alpha$}
\put(232,126){\footnotesize \color{teal}$\beta$}
\put(188,145){\footnotesize \color{red}$i$}
\put(173,123){\footnotesize \color{red}$-1$}

\put(282,126){\footnotesize \color{blue}$\alpha$}
\put(318,126){\footnotesize \color{teal}$\beta$}
\put(273,145){\footnotesize \color{red}$j$}
\put(258,123){\footnotesize \color{red}$-1$}

\put(368,126){\footnotesize \color{blue}$\alpha$}
\put(404,126){\footnotesize \color{teal}$\beta$}
\put(359,145){\footnotesize \color{red}$k$}
\put(344,123){\footnotesize \color{red}$-1$}

\put(24,40){\footnotesize \color{blue}$\alpha$}
\put(60,40){\footnotesize \color{teal}$\beta$}
\put(0,37){\footnotesize \color{red}$-1$}
\put(14,53){\footnotesize \color{red}$i$}
\put(40,53){\footnotesize \color{red}$i$}
\put(29,73){\footnotesize \color{red}$j$}
\put(29,62){\footnotesize \color{red}$k$}

\put(110,40){\footnotesize \color{blue}$\alpha$}
\put(146,40){\footnotesize \color{teal}$\beta$}
\put(90,37){\footnotesize \color{red}$i$}

\put(196,40){\footnotesize \color{blue}$\alpha$}
\put(231,40){\footnotesize \color{teal}$\beta$}
\put(200,69){\footnotesize \color{red}$j$}
\put(200,51){\footnotesize \color{red}$k$}
\put(172,37){\footnotesize \color{red}$i$}

\put(283,40){\footnotesize \color{blue}$\alpha$}
\put(318,40){\footnotesize \color{teal}$\beta$}
\put(264,37){\footnotesize \color{red}$i$}

\put(369,40){\footnotesize \color{blue}$\alpha$}
\put(404,40){\footnotesize \color{teal}$\beta$}
\put(350,37){\footnotesize \color{red}$i$}
\put(372,68){\footnotesize \color{red}$k$}
\put(372,52){\footnotesize \color{red}$j$}
\end{picture}
\caption{A complete list of defect configurations, up to homotopy, of the biaxial nematic system on the solid torus without boundary defects and subject to the condition that the map $f_0\colon \partial M \to S^3 / Q$ defined for the system satisfies $(f_0)_* (\alpha) = i \in Q$.}
\label{fig:example_classification_intro}
\end{figure}

Our model can address both ordered media within $3$-dimensional handlebodies and ordered media on the exteriors of unknotted handlebodies in the $3$-dimensional Euclidean space $\R^3$, 
where the order parameter remains constant at large distances from the origin (see Remark~\ref{rem:constant far-field}).
We leave to future work the extension of these ideas to 
%fully general handlebodies 
more general subspaces of $\R^3$ and less-restrictive boundary conditions (see Remark~\ref{rem:boundary-conds}).

\paragraph{Outline.}
The structure of the paper is as follows.
Section~\ref{sec:preliminaries} reviews prior work on the classification of knotted defects in unbounded domains, following a brief introduction of fundamental tools from algebraic topology that will be used in this paper.
Section~\ref{sec:Mathematical model} develops a mathematical model of ordered media up to homotopy in bounded domains, describes their diagrammatic representation, and introduces a certain algebraic invariant.
Section~\ref{sec:main_result} presents the main result, offering a complete classification of defect configurations up to equivalence, along with a simple illustrative example.
Section~\ref{sec:proof_of_theorem} gives the proof of the main theorem.
Appendix~\ref{sec: Classification example} offers concrete examples illustrating the classification process using the main theorem.
Appendix~\ref{sec: The subgroups of the binary octahedral group} presents a classification of subgroups of the binary octahedral group $2O$ and their conjugacy classes, which can be used for the classification of octahedral frame fields.

\subsection*{Acknowledgments}
The authors would like to thank Ivan I. Smalyukh, Tam\'{a}s K\'{a}lm\'{a}n, and Shun Wakatsuki for valuable discussion.
Author Y.~Nozaki was supported by JSPS KAKENHI Grant Numbers JP20K14317, JP23K12974 and JP24H00686.
Author D.~Palmer was supported by NSF MSPRF Grant Number 2303403.
Author Y.~Koda was supported by JSPS KAKENHI Grant Numbers JP20K03588, JP21H00978, JP23H05437 and JP24K06744.

%%%
\section{Preliminaries}
\label{sec:preliminaries}
In this section, we establish the basic notation and mathematical background that will be used in the proof of our main result.

%%%
\subsection{Groups and their presentations}
\label{subsec:group_presentation}
We recall some basic definitions and facts about groups. Let $G$ be a group with identity element $1$, and let $S$ be a subset of $G$. 
We denote by $\langle S \rangle$ the subgroup of $G$ generated by $S$; 
that is, $\langle S \rangle$ is the subgroup of all elements of 
$G$ that can be expressed as finite products of elements in $S$. 
Let $\angg{S}_{G}$ denote the \emph{normal closure} of $S$ in $G$, that is, the smallest normal subgroup of $G$ containing $S$.
We say that $G$ is \emph{finitely generated} if there exists a finite subset $S \subset G$ with $\langle S \rangle = G$.
We denote the set of finitely generated subgroups of $G$ by $\SS_G^\fg$.

We say that $G$ has a \emph{presentation} 
$\langle\, S \mid R \,\rangle$ if  
$S$ is a set, $R$ is a subset of the free group $F_S$ 
on $S$, and 
$G$ is isomorphic to the quotient of $F_S$ by the normal closure $\angg{R}_{F_S}$ of $R$. 
Here $S$ and $R$ are called sets of \emph{generators} and \emph{relators}, respectively. 
Thus, if we regard the set of generators $S$ 
as a subset of $G$ by the above isomorphism, we have 
$\langle S \rangle = G$, and $R$ is a set of products of generators, each equal to $1$, that are sufficient to express all relations among the generators. 
Two sugbroups $H$ and $H'$ of a group $G$ are said to be \emph{conjugate} 
if there exists an element $x$ of $G$ with $x H x^{-1} = H'$.

Let $C_G(S)$ denote the \emph{centralizer} of a subset $S$ of $G$; that is, $C_G(S)\coloneqq \{g\in G\mid 
\forall s\in S, \; gs=sg \}$.
Given two subgroups $H$ and $K$ of $G$, the set of \emph{double cosets} $H \backslash G / K$ consists
of equivalence classes $[g]$ of elements of $G$, where $[g] = [g']$ if and only if $hgk = g'$ for some $h \in H$ and $k \in K$.

%%%
\subsection{Homotopy groups}
\label{subsec:homotopy}
Recall that the fundamental group $\pi_1(X, x_0)$ of a topological space $X$ with a 
base point $x_0$ is defined to be the group of homotopy classes of loops 
$[0,1] \to X$ based at $x_0$, where the product of 
two elements is defined by their concatenation. 

For pairs of topological spaces $(X,A)$ and $(Y,B)$, let $\Map((X,A),(Y,B))$ denote the set of continuous maps $f\colon X\to Y$ satisfying $f(A)\subset B$.
In particular, fixing base points $x_0 \in X$ and $y_0 \in Y$, 
the set $\Map((X,\{x_0\}),(Y,\{y_0\}))$ of basepoint-preserving continuous maps is simply denoted by $\Map_0(X,Y)$.
Let $\Hom(G,H)$ denote the set of homomorphisms from a group $G$ to $H$.
There is a natural map $\Map_0(X,Y)\to \Hom(\pi_1(X),\pi_1(Y))$, which takes a map $f\colon X \to Y$ to the induced homomorphism 
$f_\ast \colon \pi_1(X, x_0) \to \pi_1(Y, y_0)$ between the fundamental groups. 
This map descends to
\[
\theta\colon \Map_0(X,Y)/{\simeq}\to \Hom(\pi_1(X),\pi_1(Y)),
\]
where $\simeq$ is the equivalence relation of basepoint-preserving homotopy.

\begin{example}
For convenience, we describe a few specific groups relevant to the topology of defects in ordered media. Following \cite{Mer79}, these groups arise as fundamental groups of order parameter spaces given as quotients $\SO(3)/H$ of the $3$-dimensional rotation group $\SO(3)$ by finite subgroups (point groups) $H$. 
In such cases, $\pi_1(\SO(3)/H) = 2H \subset \SU(2)$, a central extension of $H$ by $\Z_2$.
\begin{itemize}
\item The \emph{quaternion group} 
$Q=\{\pm 1, \pm i, \pm j, \pm k\}$ is the finite non-abelian group generated by four elements $-1, i, j, k$ such that 
$-1$ commutes with other elements and 
they satisfy $(-1)^2 = 1$ and $i^2 = j^2 = k^2 = ijk = -1$.
Note that $Q = 2K$, where 
$K$ is the \emph{Klein four-group} $\Z_2 \times \Z_2$. 
The group $Q$ classifies line defects in \emph{biaxial nematics}.
\item The \emph{binary tetrahedral group} $\BTet$ is obtained by adding 
sixteen elements
\[ c^{\pm 1}, c^{\pm 2}, \alpha^{\pm 1}, \alpha^{\pm 2}, \beta^{\pm 1}, \beta^{\pm 2}, \gamma^{\pm 1} , \gamma^{\pm 2} \]
to the quaternion group $Q$, 
where $c \defeq \frac{1}{2}(1 + i + j + k)$, $\alpha \defeq \frac{1}{2}(1 + i - j - k)$, 
$\beta \defeq \frac{1}{2}(1 - i + j - k)$ and $\gamma \defeq \frac{1}{2}(1 - i - j + k)$.
\item The \emph{binary octahedral group} $\BOct$ is obtained by adding 
$24$ elements to $\BTet$:
\[
\frac{1}{\sqrt{2}}(\pm 1 \pm i),\  \frac{1}{\sqrt{2}}(\pm 1 \pm j),\ 
\frac{1}{\sqrt{2}}(\pm 1 \pm k),\ \frac{1}{\sqrt{2}}(\pm i \pm j),\ 
\frac{1}{\sqrt{2}}(\pm j \pm k),\ \frac{1}{\sqrt{2}}(\pm k \pm i).
\]
$\BOct$ classifies line defects in \emph{octahedral frame fields}.
\end{itemize}
In this way, we have a stratification $Q=2K \subset \BTet \subset \BOct$. 
One can check that every element of $Q$ can be expressed as a product of powers of $i$ and $j$; thus, we can write $Q = \langle i,j \rangle$. 
Similarly, we have $\BTet = \langle i, c \rangle$ and $\BOct = \left\langle c, \frac{1}{\sqrt{2}} (i+j) \right\rangle$. 
Note that $\langle i \rangle$ and $\langle j \rangle$ are not conjugate 
as subgroups of $Q$, whereas they are conjugate as subgroups of $\BTet$ (and so of $\BOct$), for we have $c \langle i \rangle c^{-1} = \langle j \rangle$. 
See Appendix~\ref{sec: The subgroups of the binary octahedral group} for the complete list of subgroups of $\BOct$ and their conjugacy classes.
\end{example}

For an integer $n\geq 2$, the $n$th homotopy group $\pi_n(X,x_0)$ is obtained from $\Map_0(S^n,X)$ by considering basepoint-preserving homotopies and a concatenation operation similar to that of $\pi_1$.
In contrast to $\pi_1$, these groups are always abelian.
See, e.g., Hatcher \cite{Hat02}, for more details. 

%%%
\subsection{Wirtinger presentation}
\label{subsec:Wirtinger_presentation}

\begin{definition}
By a \emph{graph} we mean a finite $1$-dimensional CW complex, where 
$0$-cells and $1$-cells are called its \emph{vertices} and \emph{edges}, respectively. 
A subspace $\Gamma$ of $S^3$ (or $\R^3$) is called a  \emph{spatial graph} if 
$\Gamma$ is ambient isotopic to the union of a finite number of straight line segments, where 
$\Gamma$ is endowed with the structure of a graph. 
\end{definition}
Note that knots and links are examples of spatial graphs. 
Note also that the same subset of $S^3$ (or $\R^3$) admits distinct graph structures, distinguished by vertices of valence two. 
Two spatial graphs are said to be \emph{equivalent} if they are related by an ambient isotopy preserving the graph structure.

Fix a projection $p\colon \R^3 \to \R^2$. 
A spatial graph $\Gamma \subset \R^3$ is said to be in a \emph{regular position} with respect to $p$ if 
\begin{enumerate}[label=(\arabic*)]
\item 
for each point $x$ in $\R^2$, the set $p^{-1} (x) \cap \Gamma$ consists of at most two points; 
\item
there are only finitely many points $x$ of $\R^2$ with 
$| p^{-1} (x) \cap \Gamma | = 2$; 
such points $x$ are called \emph{crossings}; 
\item
no vertex of $\Gamma$ projects to a crossing; and 
\item
each crossing $x \in \R^2$ has a neighborhood $U_x \subset \R^2$ such that the set 
$p^{-1} (U_x) \cap \Gamma$ consists of two straight line intervals.
\end{enumerate}
When we consider spatial graphs in $S^3$ instead of in $\R^3$, we use the projection $S^3 \to S^2$ 
induced from $p\colon \R^3 \to \R^2$ by the one-point compactification. 
Note that every spatial graph can be moved by an isotopy to be in a regular position with respect to $p$. 
Let $\Gamma$ be a spatial graph in a regular position with respect to $p$. 
A \emph{diagram} $D_\Gamma$ of $\Gamma$ is defined to be the image $p (\Gamma )$ such that 
at each crossing, the incident arcs are labeled with their relative height (over/under). 
Thus, a diagram of a spatial graph is a purely combinatorial object consisting of 
a planar graph on $\R^2$ with over/under information at some of its vertices. 
Diagrams of equivalent spatial graphs can be related by a finite sequence of local moves called \emph{Reidemeister moves}, which are illustrated in \Cref{fig:local_move_2} (in the case of colored diagrams).

Let $\Gamma \subset S^3$ be a spatial graph, and 
take $x_0 \in S^3 \setminus \Gamma$ as a base point. 
Then, we can get a finite presentation of the fundamental group 
$\pi_1 (S^3 \setminus \Gamma)$ 
from a diagram $D_\Gamma$ as follows. 
For simplicity, we assume that the diagram is connected. 
Fix an orientation of each edge of $\Gamma$. 
The diagram of $\Gamma$ consists of finitely many oriented arcs $\alpha_1, \ldots, \alpha_n$, where the endpoints of each $\alpha_i$ are 
vertices or crossings. 
For $i=1, \ldots, n$, assign a letter $x_i$ to each oriented arc $\alpha_i$. 
At each crossing or vertex $c_j$ ($j=1, \ldots, m$), 
define a relator $r_j$ following the 
rule shown in Figure \ref{fig:wirtinger}, 
where at each vertex $v$, 
$\varepsilon_{i_r}=1$ (resp.\  $\varepsilon_{i_r}=-1$) if the 
$r$th half-edge incident to $v$ is a tail (resp.\ head). 
\begin{figure}[htbp]
\centering\includegraphics[width=0.70\textwidth]{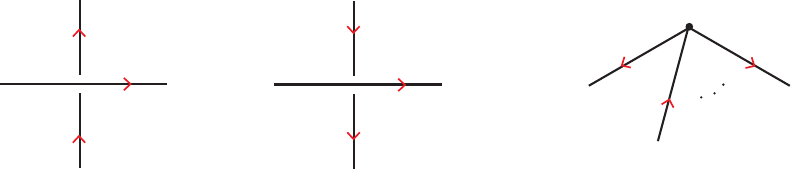}
\begin{picture}(400,0)(0,0)
\put(95,42){\color{red} $x_i$}
\put(66,70){\color{red} $x_j$}
\put(66,30){\color{red}
$x_k$}
\put(60,0){\color{red} $x_i x_j x_i^{-1} x_k^{-1}$}

\put(198,42){\color{red} $x_i$}
\put(169,70){\color{red} $x_j$}
\put(169,30){\color{red} $x_k$}
\put(160,0){\color{red} $x_i x_j^{-1} x_i^{-1} x_k$}

\put(273,66){\color{red} $x_{i_1}$}
\put(285,44){\color{red} $x_{i_2}$}
\put(333,66){\color{red} $x_{i_l}$}
\put(280,0){\color{red} $x_{i_1}^{\varepsilon_1}  x_{i_2}^{\varepsilon_2} \cdots x_{i_l}^{\varepsilon_l}$}
\end{picture}
\caption{Relators in a Wirtinger presentation.}
\label{fig:wirtinger}
\end{figure}
The group $\pi_1 (S^3 \setminus \Gamma, x_0)$ then has a presentation $\langle x_1, \ldots, x_n \mid r_1, \ldots, r_m \rangle$, which is called a \emph{Wirtinger presentation}. 
In this paper, concatenations of loops in $\pi_1$ are read from left to right, the basepoints of all diagrams are above the page, and orientations of meridians follow the right-hand rule.
See Kawauchi~\cite{Kaw96} for more details.

%%%
\subsection{Review of previous work}
\label{subsec:previous_work}
We recall the mathematical model of global defect configurations introduced in \cite{NKTK24}.
Let $G$ be a group, and let $X_G$ be a CW complex satisfying $\pi_1(X_G)\cong G$ and $\pi_2(X_G)=\{0\}$.
Note that such a space $X_G$ always exists for any $G$. 
Following standard terminology in physics, we refer to $X_G$ as the \emph{order parameter space}.

Fix basepoints $p_0 \in \R^3$ and $x_0 \in X_G$.
\begin{definition}
A \emph{defect configuration} $(\Gamma, f)$ comprises a spatial graph $\Gamma$, the \emph{defect set}, together with an \emph{order parameter field} $f \in \Map_0(\R^3 \setminus \Gamma, X_G)$ defined on the complement of $\Gamma$.	
\end{definition}

We classify defect configurations up to an equivalence relation generalizing homotopy.
For defect sets $\Gamma, \Gamma'$ and order parameter fields $f\in \Map_0(\R^3\setminus \Gamma, X_G)$ and $f'\in \Map_0(\R^3\setminus \Gamma', X_G)$, we say $(\Gamma, f) \sim (\Gamma', f')$, if, after adding finitely many vertices $v_i$, $v'_j$ and edges $e_k$, $e'_l$ to each graph, there exists a basepoint-preserving ambient 
isotopy $\{h_t\}_{t\in [0,1]}$ of $\R^3$ sending $\widehat{\Gamma}=\Gamma\cup\{v_i\}\cup\{e_k\}$ to 
$\widehat{\Gamma}'=\Gamma'\cup\{v'_j\}\cup\{e'_l\}$ in such a way that $f|_{\R^3\setminus \widehat{\Gamma}}$ is homotopic to $f'|_{\R^3\setminus \widehat{\Gamma}'}\circ h_1$.

\begin{theorem}[\cite{NKTK24}]
\label{thm:NKTK24}
The map 
\[
\Phi\colon \left(\coprod_\Gamma \Map_0(\R^3\setminus \Gamma, X_G)\right)/{\sim} \to \SS^\fg_G
\]
defined by $\Phi([f])=\Im(f_\ast\colon \pi_1(\R^3\setminus\Gamma)\to G)$,
%where $\SS_G^\fg$ denotes the set of finitely generated subgroups of $G$, 
is bijective.
\end{theorem}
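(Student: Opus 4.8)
The plan is to establish the three properties of $\Phi$ separately: that it is well defined (lands in $\SS^\fg_G$ and is $\sim$-invariant), surjective, and injective, with injectivity being the substantive part.

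For well-definedness, note first that a Wirtinger presentation exhibits $\pi_1(\R^3\setminus\Gamma)$ as finitely generated whenever $\Gamma$ is a finite graph, so its image under $f_*$ is a finitely generated subgroup and $\Phi$ indeed takes values in $\SS^\fg_G$. To see invariance under $\sim$, I would check the three ingredients of the relation in turn. Passing from $\Gamma$ to an enlarged graph $\widehat\Gamma$ and restricting $f$ does not change the image, because a loop and an added edge are both $1$-dimensional and hence, by general position, any loop in $\R^3\setminus\Gamma$ can be pushed off the added edges within $\R^3\setminus\Gamma$; thus the inclusion $\R^3\setminus\widehat\Gamma\hookrightarrow\R^3\setminus\Gamma$ is surjective on $\pi_1$ and $\Im(f_*)$ is unchanged. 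An ambient isotopy induces an isomorphism on $\pi_1$, and a basepoint-preserving homotopy leaves $f_*$ untouched; combining these gives $\Im(f_*)=\Im(f'_*)$ whenever $(\Gamma,f)\sim(\Gamma',f')$.

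For surjectivity, given a finitely generated subgroup $H=\langle h_1,\dots,h_k\rangle\le G$, I would take $\Gamma$ to be a $k$-component unlink $U_k$, so that $\pi_1(\R^3\setminus U_k)$ is free on meridians $m_1,\dots,m_k$, and realize the homomorphism $\phi\colon m_i\mapsto h_i$ by a map $f$ constructed through obstruction theory: send the $1$-skeleton of a spine of the complement to loops representing the $h_i$, then extend over each $2$-cell because the corresponding defining relation maps to $1$ in $G=\pi_1(X_G)$. Since a graph complement in $\R^3$ is homotopy equivalent to a $2$-complex, no higher obstruction arises, and $\Im(f_*)=\langle h_1,\dots,h_k\rangle=H$.

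Injectivity is the crux. Suppose $\Im(f_*)=\Im(f'_*)=H$; the plan is to reduce each configuration to a normal form determined by $H$. The first step is to replace the defect graph by an unlink $U_n$ carrying the Wirtinger meridian labels $g_i\defeq f_*(x_i)$, whose tuple $(g_1,\dots,g_n)$ generates $H$. This reduction proceeds by realizing crossing changes and vertex modifications within $\sim$: a crossing change is performed inside a ball, and after adjoining an auxiliary edge joining the two strands (permitted by the adding-edges clause) the two local pictures become ambient isotopic, while the two fields differ only by an extension problem whose obstruction lies in $\pi_2(X_G)=\{0\}$ and therefore vanishes. The second step is to show that $(U_n,f_\phi)$ and $(U_{n'},f_{\phi'})$ are equivalent whenever their label tuples generate the same $H$; since any two finite generating tuples of $H$ are connected by Nielsen transformations together with stabilization, it suffices to realize each elementary move—inverting $g_i$ by reversing the orientation of the $i$-th component, $g_i\mapsto g_ig_j$ by a handle slide of component $i$ over component $j$ carried out by an added band with the field dragged along, and the insertion or deletion of a trivial label $1$ by a small unknotted component over whose bounding disk $f$ extends and which is therefore removable. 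I expect the main obstacle to be exactly this catalogue of realizations—above all the crossing change and the handle slide—where one must produce explicit ambient isotopies of the enlarged graphs together with matching homotopies of the order parameter field; the uniform enabling fact throughout is the vanishing of $\pi_2(X_G)$, which is precisely what kills the field-extension obstructions that these moves generate.
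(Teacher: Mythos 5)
Your proposal is correct and follows essentially the same route as the proof in \cite{NKTK24} that this theorem quotes and that the present paper generalizes: reduce an arbitrary defect configuration to a labeled unlink in normal form via crossing-to-vertex moves (the adding-edges clause of $\sim$) and edge contractions, then connect any two generating tuples of $H$ by Nielsen transformations plus stabilization, realized geometrically by orientation reversals, handle slides, and addition or deletion of $1$-labeled trivial components, with $\pi_2(X_G)=\{0\}$ and the $2$-dimensionality of the complement killing every field-extension obstruction. The only difference is organizational: the paper packages the obstruction theory once and for all in the Eilenberg--MacLane correspondence (Lemma~\ref{lem:MapHom}), after which every move is purely combinatorial on $G$-colored diagrams, whereas you re-invoke $\pi_2(X_G)=\{0\}$ move by move; the mathematical content is the same.
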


%%%%%%%%%
\section{Mathematical model}
\label{sec:Mathematical model}

Let $k$ be a non-negative integer. 
A \emph{handlebody} of genus $k$ is a compact orientable 3-manifold with non-empty boundary obtained by attaching 
$k$ 1-handles to a $0$-handle. 
Note that a handlebody of genus $0$ is just a closed $3$-ball, 
while a handlebody of genus $1$ is a solid torus $D^2 \times S^1$.  
See \Cref{fig:handlebodies}. 
In this section, we extend the mathematical model reviewed in Section~\ref{subsec:previous_work} to cover defects in handlebodies.

\begin{figure}[h]
    \centering
    \includegraphics[width=0.8\textwidth]{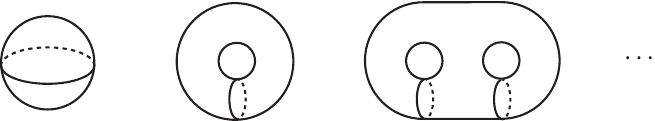}
\begin{picture}(400,0)(0,0)
\put(38,0){genus 0}
\put(137,0){genus 1}
\put(255,0){genus 2}
\end{picture}
    \caption{Handlebodies.}
    \label{fig:handlebodies}
\end{figure}

Let $W$ be an unknotted handlebody of genus $k$ in $S^3$; that is, a handlebody whose exterior $M=E(W)=S^3\setminus\Int W$ is also a handlebody. 
For example, if we regard the handlebodies shown in Figure~\ref{fig:handlebodies} 
as embedded in $S^3$, then they are unknotted. 
An unknotted handlebody of genus $1$ can be obtained as a regular neighborhood of the unknot, and \emph{knotted} handlebodies of genus $1$ can be obtained as regular neighborhoods of non-trivial knots in $\R^3$.
For convenience, we work on the ``unbounded'' handlebody $M$ and treat its complement as the exterior.

%%%
\subsection{Formulation via continuous maps}
Fix basepoints $\ast_\partial\in (\partial M)\setminus P$ and $\ast \in M$ for the boundary and interior, respectively.
Also fix a set of \emph{boundary defects} $P=\{p_1,\dots,p_n\}\subset \partial M$ and \emph{boundary values} $f_0\in \Map_0((\partial M)\setminus P, X_G)$ such that $f_0$ cannot extend to any $p_i$.

We consider spatial graphs $\Gamma$ in $M \setminus \{\ast\}$ terminating at boundary defects---i.e., satisfying $\partial M\cap \Gamma = P\cap \partial \Gamma$.
For such a graph, define the associated set of order parameter fields satisfying boundary conditions
\[
\Map_0(M \setminus \Gamma, X_G; f_0) = \{f\in \Map_0(M \setminus \Gamma, X_G) \mid \text{$f|_{\partial M \setminus P}$ is homotopic to $f_0$}\}.
\]
\begin{definition}
\label{def: equivalence relation in our mathematical model}
We take the total set of defect configurations to be the disjoint union $\coprod_\Gamma \Map_0(M \setminus \Gamma, X_G; f_0)$, where $\Gamma$ runs over spatial graphs as above.
We consider two maps $f\in \Map_0(M \setminus \Gamma, X_G; f_0)$ and $f'\in \Map_0(M \setminus \Gamma', X_G; f_0)$ to be \emph{equivalent}, $f\sim f'$, if there exist finitely many vertices $\{v_i\}$, $\{v'_k\}$ and edges $\{e_j\}$, $\{e'_l\}$, and a basepoint-preserving ambient isotopy $\{h_t\}_{t\in [0,1]}$ of $M$, such that:
\begin{itemize}
	\item $h_t$ restricts to the identity map on $\partial M$;
	\item $h_t$ sends $\widehat{\Gamma}\coloneqq\Gamma\cup\{v_i\}\cup\{e_j\}$ to $\widehat{\Gamma}'\coloneqq\Gamma'\cup\{v'_k\}\cup\{e'_l\}$; and
	\item $f|_{M \setminus \widehat{\Gamma}}$ is homotopic to $f'|_{M \setminus \widehat{\Gamma}'}\circ h_1$ relative to basepoints.
\end{itemize}
\end{definition}

\begin{remark}
\label{rem:boundary-conds}
The boundary conditions we consider here differ subtly from those that might be more familiar from applications. 
While in many applications boundary defects are free to move along the boundary, our model strictly fixes the positions of the boundary defects $P$.
	In contrast to this strong constraint on boundary defect positions, our treatment of boundary values is relatively loose:
	we consider them fixed only up to homotopy through the full order parameter space $X_G$.
	Some applications feature ``anchoring'' boundary conditions, which constrain boundary values to a subbundle of
	the full order parameter bundle. For example, a common condition on nematics or octahedral frames constrains them to align to the boundary normals.
	Such ``subbundle'' boundary conditions are inexpressible in our model. 
 Nevertheless, we believe that our model will serve as a bridge to understanding these more familiar boundary conditions as our model is geometrically meaningful and allows complete classification of defect configurations as in the boundary-free case.
\end{remark}

Our classification relates defect configurations to subgroups of $G$ augmented with additional data. 
Given subsets $S, S'$ of $G$ with $S \supset S'$, we write $\SS_{G,S,S'}$ for the set of triples $(H,N,[g])$ such that
\begin{enumerate}[label=(\arabic*)]
    \item $H$ is a subgroup of $G$, 
    $N$ is a normal subgroup of $H$, and 
    $g$ is an element of $G$ such that $gSg^{-1}\subset H$ and $gS'g^{-1}\subset N$;
    \item there exists a finite subset $T$ of $N$ satisfying $\ang{gSg^{-1}\cup T}=H$ and $\angg{T}_H=N$; and
    \item $[g]$ is a double coset represented by $g$ in $N\backslash G/C_G(S)$.
\end{enumerate}
Note that the conditions $gSg^{-1}\subset H$ and $gS'g^{-1}\subset N$ do not depend on the choice of a representative of $[g]$. 
When $G$ is a finite group, we can drop (2) from the above definition by letting $T=N$.

Intuitively, given a set of boundary monodromies $S$, the subgroup $H$ encodes all monodromies both around defects and along non-contractible loops on $\partial M$, the normal subgroup $N$ of $H$ captures only 
monodromies around interior defects,
and $[g]$ records the choice of path connecting the boundary and interior basepoints. 
The necessity of all three data will be demonstrated explicitly in \Cref{ex:meaning_of_g}.
Later, we will mainly consider the case where $S$ is a finitely generated subgroup; then (2) implies that $H$ is also finitely generated.
In particular, $\SS_{G,\{1\},\{1\}}$ corresponds to $\SS^\fg_G$ in Section~\ref{subsec:previous_work} since $H=N$ and $[g]=[1]$ for any $(H,N,[g])\in \SS_{G,\{1\},\{1\}}$.

\begin{example}
\label{example: triple}
Let $G$ be the quaternion group $Q$, 
$S=\langle i \rangle = \{ \pm 1, \pm i \}$, 
and $S' = \{ 1 \}$. 
Let $(H, N, [g])$ be an element of 
$\SS_{Q,\langle i \rangle,\{1\}}$. 
Since 
$C_G (S) = C_Q ( \langle i \rangle ) = \langle i \rangle$, 
when $N = \langle j \rangle, \langle k \rangle$, or $Q$, the set $N \backslash Q / C_Q ( \langle i \rangle )$ consists of a single coset $Q$. 
On the other hand, when $N = \{ 1 \}, \langle -1 \rangle$, or $\langle i \rangle$, the set $N \backslash Q / C_Q ( \langle i \rangle )$ consists of exactly two double cosets 
$[1] = \{ \pm 1, \pm i  \}$ and 
$[j] = \{ \pm j, \pm k  \}$.

Since $H$ should contain a conjugate of 
$S = \langle i \rangle$, 
$H$ is $\langle i \rangle$ or $Q$. 
When $H = \langle i \rangle$, 
$(H, N, [g])$ is one of the six elements 
$(\langle i \rangle, \{ 1 \} , [1]) $, 
$(\langle i \rangle, \{ 1 \} , [j]) $, 
$(\langle i \rangle, \langle -1 \rangle , [1]) $
$(\langle i \rangle, \langle -1 \rangle , [j])$, 
$(\langle i \rangle, \langle i \rangle , [1]) $, and  
$(\langle i \rangle, \langle i \rangle , [j])$; when $H = Q$,  
$(H, N, [g])$ is one of the three elements 
$(Q, \langle j \rangle , [1]) $, 
$(Q, \langle k \rangle , [1])$ and 
$(Q, Q , [1]) $. 
In this way, we can obtain the complete list of elements of $\SS_{Q,\langle i \rangle,\{1\}}$. 
Note that $(Q, \langle i \rangle, [1] )$, for example, is not an element of $\SS_{Q,\langle i \rangle,\{1\}}$,  for we cannot find a subset $T$ of $\langle i \rangle$ satisfying 
$\langle \langle i \rangle \cup T \rangle = Q$. 
The following lists of elements of $\SS_{Q,\langle i \rangle, \langle -1 \rangle }$ and 
$\SS_{Q,\langle i \rangle,\langle i \rangle}$
can also be obtained in the same way: 
\begin{align*}
\SS_{Q,\langle i \rangle, \langle -1 \rangle}  &= \left\{
\begin{array}{l}
(\langle i \rangle , \langle -1 \rangle, [1]), ~
(\langle i \rangle , \langle -1 \rangle, [j]), ~
( \langle i \rangle , \langle i \rangle , [1] ),~ 
( \langle i \rangle , \langle i \rangle , [j] ), \\
( Q , \langle j \rangle , [1] ), ~
( Q , \langle k \rangle , [1] ), ~
( Q , Q , [1] )
\end{array}
\right\}, \\
\SS_{Q,\langle i \rangle, \langle i \rangle}  
&= \left\{ 
( \langle i \rangle , \langle i \rangle , [1] ), ~
( \langle i \rangle , \langle i \rangle , [j] ), ~
( Q , Q , [1] )
\right\} .
\end{align*}
\end{example}

Let $f\in \Map_0( M \setminus \Gamma, X_G; f_0)$.
For each edge $e$ of $\Gamma$, let $\mu_e$ be a meridian of $e$, that is, a boundary loop of a sufficiently small disk intersecting $e$ transversely at a point. 
Connecting $\mu_e$ to $\ast$ defines $f_\ast(\mu_e)$ up to conjugation in $\Im f_\ast$; consequently,
the normal closure $N_f$ of $\{f_\ast(\mu_e)\mid \text{$e$ is an edge of $\Gamma$}\}$ in $\Im f_\ast$
is well-defined.
Let $N_{\beta\gamma}$ be the normal closure of $\{(f_0)_\ast(\gamma)\mid \gamma\in \{\beta_1,\dots,\beta_{k}, \gamma_1,\dots,\gamma_n\}\}$ in $\Im (f_0)_\ast$, where $\beta_i$, $\gamma_j$ form a subset of basis loops for $\pi_1(\partial M \setminus P)$ as depicted in \Cref{fig:loops_beta_gamma}. 
Intuitively, $N_f$ encodes the monodromies around defects, and $N_{\beta\gamma}$ collects the monodromy data imposed by the boundary conditions on loops that would be contractible in $M$ if not for the defects.
%, where $f_\ast\colon \pi_1( M \setminus \Gamma,\ast)\to G$ is the induced homomorphism.
Then, we define a map
\[
\Phi\colon \Biggl(\coprod_\Gamma \Map_0( M \setminus \Gamma, X_G; f_0)\Biggr)/{\sim} \to \SS_{G,\Im(f_0)_\ast,N_{\beta\gamma}}
\]
by $\Phi([f])=(\Im f_\ast, N_f, [g])$.
Here $[g]\in N_f\backslash G/C_G(\Im(f_0)_\ast)$ is a unique element satisfying $g((f_0)_\ast(x))g^{-1}=(f_\ast\circ\iota_\ast)(x)$ for any $x\in \pi_1(\partial M \setminus P, \ast_\partial)$, where $\iota_\ast$ is a homomorphism induced by the inclusion $\iota\colon \partial M \setminus P\to M \setminus \Gamma$ and a choice of path connecting $\ast$ and $\ast_\partial$.

\begin{figure}[htbp]
\centering\includegraphics[width=0.48\textwidth]{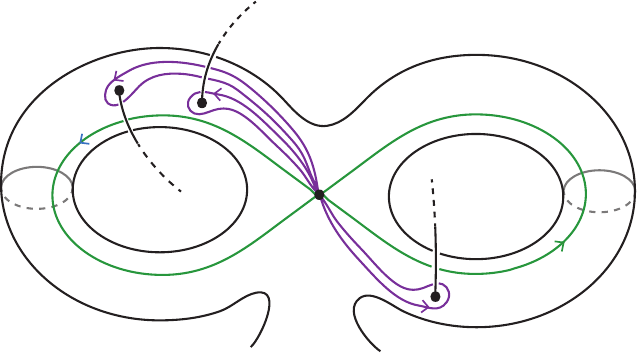}
\begin{picture}(400,0)(0,0)
\put(147,92){\color{violet} $\gamma_1$}
\put(120,95){\color{violet} $\gamma_2$}
\put(111,80){\color{teal} $\beta_1$}

\put(279,39){\color{teal} $\beta_g$}
\put(244,27){\color{violet} $\gamma_n$}

\put(193,10){$W$}
\end{picture}
\caption{The oriented based loops $\beta_1,\dots,\beta_{k}, \gamma_1,\dots,\gamma_n$ on the boundary of the handlebody $W$.}
\label{fig:loops_beta_gamma}
\end{figure}

If $f\sim f'$ via $\{h_t\}_t$, then we have a commutative diagram
\[
\xymatrix{
\pi_1(M\setminus \widehat{\Gamma}) \ar@{->>}[r]^-{\incl_\ast} \ar[d]_-{h_1}^-{\cong} & \pi_1(M\setminus \Gamma) \ar[r]^-{f_\ast} & G \ar@{=}[d] \\
\pi_1(M\setminus \widehat{\Gamma}') \ar@{->>}[r]^-{\incl_\ast} & \pi_1(M\setminus \Gamma') \ar[r]^-{f'_\ast} &  G,
}
\]
and thus $\Im f_\ast=\Im f'_\ast$.
The commutative diagram also implies that $N_f=N_{f'}$ and $f_\ast\circ\iota_\ast = f'_\ast\circ\iota_\ast\colon \pi_1(\partial M \setminus P, \ast_\partial)\to G$.
Therefore, the map $\Phi$ is well-defined.

The main result of this paper is the following theorem, which gives the complete classification of defect configurations in handlebodies up to the above equivalence relation.
Inspired by the approach taken in \cite{NKTK24}, we will reduce our claim to a statement about colored diagrams, Theorem~\ref{thm:comb_classif} below.

\begin{theorem}
\label{thm:top_classif}
The map $\Phi$ is a bijection.
\end{theorem}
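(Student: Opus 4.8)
The plan is to factor $\Phi$ through colored diagrams, mirroring the strategy of \cite{NKTK24}, and to reduce Theorem~\ref{thm:top_classif} to the combinatorial statement Theorem~\ref{thm:comb_classif}. First I would assemble the homotopy-theoretic dictionary. Since $M$ is a handlebody and $\Gamma$ a spatial graph, the complement $M\setminus\Gamma$ is homotopy equivalent to a $2$-dimensional complex; combined with $\pi_2(X_G)=0$, obstruction theory shows that the map $\theta$ of Section~\ref{subsec:homotopy} is a bijection for such a source, so that an order parameter field $f\in\Map_0(M\setminus\Gamma,X_G)$ is determined up to basepoint-preserving homotopy by the induced homomorphism $f_\ast\colon\pi_1(M\setminus\Gamma)\to G$. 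Through the Wirtinger presentation of Section~\ref{subsec:Wirtinger_presentation}, prescribing $f_\ast$ is the same as choosing a $G$-coloring of a diagram of $\Gamma$ by meridional monodromies subject to the crossing and vertex relations. The three generators of $\sim$ then translate cleanly: ambient isotopy of $\Gamma$ fixing $\partial M$ becomes a sequence of Reidemeister-type moves, stabilization by vertices and edges becomes the corresponding diagrammatic stabilizations, and homotopy of $f$ is already absorbed by $\theta$. This yields a bijection between $\bigl(\coprod_\Gamma\Map_0(M\setminus\Gamma,X_G;f_0)\bigr)/{\sim}$ and colored diagrams modulo moves, compatible with the passage to $(H,N,[g])$, whence Theorem~\ref{thm:comb_classif} finishes the proof.

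Before invoking the combinatorial classification I would pin down the two ambiguities in the conjugating element $g$, since these are what force the target to be a double-coset set. The right $C_G(S)$-ambiguity, with $S=\Im(f_0)_\ast$, is intrinsic: for a fixed connecting path, the elements $g$ solving $g(f_0)_\ast(x)g^{-1}=(f_\ast\circ\iota_\ast)(x)$ for all $x$ form a single right coset of $C_G(S)$, since any two differ by an element centralizing $S$. The left $N$-ambiguity arises from the equivalence relation: an isotopy fixing $\partial M$ drags the connecting path, and the resulting difference loop is null-homotopic in $M$, hence lies in the kernel of $\pi_1(M\setminus\Gamma)\to\pi_1(M)$, which is the normal closure of the meridians; applying $f_\ast$ places its value in $N_f=N$. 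Together these identify the well-defined invariant as a class in $N\backslash G/C_G(S)$, matching condition (3).

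For surjectivity I would, given a triple $(H,N,[g])$, use condition (2) to pick a finite $T\subset N$ with $\langle gSg^{-1}\cup T\rangle=H$ and $\angg{T}_H=N$, and then build an explicit colored diagram in a standard picture of $M$: the loops $\beta_i,\gamma_j$ and the remaining generators of $\pi_1(\partial M\setminus P)$ colored by $g(f_0)_\ast(\cdot)g^{-1}$, together with unknotted interior defect arcs whose meridians carry the elements of $T$. A direct check gives $\Im f_\ast=H$, $N_f=N$, and the prescribed double coset, while the boundary restriction is homotopic to $f_0$ once the conjugation by $g$ is absorbed into the path. The step I expect to be the main obstacle is injectivity, that is, showing two colored diagrams with equal $(H,N,[g])$ are related by moves: the difficulty is that the underlying spatial graphs may differ drastically, so one must first use stabilizations and Reidemeister moves to reach a common normal form adapted to the extension $N\trianglelefteq H$ and to the boundary coloring, and then verify that the residual freedom in $g$ realized by the moves is exactly left-$N$ and right-$C_G(S)$. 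Checking that the moves generate neither too little nor too much of this ambiguity---so that $(H,N,[g])$ is a complete invariant, as illustrated in \Cref{ex:meaning_of_g}---is the heart of the argument.
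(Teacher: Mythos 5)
Your proposal takes essentially the same route as the paper: the obstruction-theoretic dictionary is the paper's Lemma~\ref{lem:MapHom} (proved there via a $K(G,1)$ obtained from $X_G$), your analysis of the two ambiguities in $g$ (right $C_G(\Im(f_0)_\ast)$ for a fixed connecting path, left $N_f$ from isotopy-dragged paths whose difference loops die in $\pi_1(M)$ and hence land in the normal closure of the meridians) is exactly the paper's well-definedness argument for $\Phi$, and the reduction to Theorem~\ref{thm:comb_classif}---with surjectivity via a standard-form diagram realizing a finite $T\subset N$ and boundary-determined colors, and injectivity via normalizing diagrams and checking the residual freedom in $g$---is precisely how Section~\ref{sec:proof_of_theorem} proceeds. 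The only cosmetic deviation is your claim of a direct bijection between fields mod $\sim$ and colored diagrams mod moves: the paper only establishes surjectivity of that comparison map and deduces its injectivity formally from the injectivity of $\Psi$, thereby avoiding the Reidemeister-type translation of ambient isotopies (with the rose $B$ fixed and the forbidden $\RR_5$ moves excluded) that your sketch glosses over.
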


Note here that if $M$ is a $3$-ball and there are no boundary defects, then $N_f=\Im f_\ast$ and $C_G(\Im(f_0)_\ast)=G$ since $(f_0)_\ast$ is trivial, and thus $\SS_{G,\Im(f_0)_\ast,N_{\beta\gamma}}$ corresponds to $\SS^\fg_{G}$ in Section~\ref{subsec:previous_work}. 
Thus, \Cref{thm:top_classif} can naturally be thought of as a generalization of Theorem~\ref{thm:NKTK24}.

The next lemma enables us to describe continuous maps in terms of homomorphisms, 
which will be used in Section~\ref{sec:main_result}.
Note that the freedom to define $f|_{\partial M}$ %only 
up to homotopy through the full $X_G$ is essential here (cf.\ Remark~\ref{rem:boundary-conds}).

\begin{lemma}
\label{lem:MapHom}
The natural map \textup{(}see \Cref{subsec:homotopy}\textup{)}
\[
\theta\colon \Map_0(M \setminus \Gamma, X_G; f_0)/{\simeq}\to \{\varphi\in \Hom(\pi_1(M \setminus \Gamma,\ast),G)\mid \text{$\varphi\circ\iota_\ast$ is conjugate to $(f_0)_\ast$}\}
\]
is a bijection, where $\iota\colon \partial M \setminus P\hookrightarrow M \setminus \Gamma$ is the inclusion and $\simeq$ is the equivalence relation of basepoint-preserving homotopy.
%relative to $\partial H$.
\end{lemma}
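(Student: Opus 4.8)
The plan is to deduce the lemma from the classical obstruction-theoretic classification of maps out of a $2$-dimensional complex into a space with vanishing $\pi_2$, after first reducing $M\setminus\Gamma$ to such a complex. The key geometric input is that, writing $N(\Gamma)$ for an open regular neighborhood of $\Gamma$ in $M$, the complement $M\setminus\Gamma$ deformation retracts onto the compact $3$-manifold $M\setminus N(\Gamma)$ with nonempty boundary, and every compact $3$-manifold with nonempty boundary admits a handle decomposition without $3$-handles, hence is homotopy equivalent to a CW complex $Y$ with $\dim Y\le 2$. Since $\theta$, and the two subsets appearing in the statement, are invariant under (basepoint-preserving, respectively free) homotopy, I may replace $M\setminus\Gamma$ by $Y$ throughout, taking $\ast$ to be a $0$-cell.

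First I would establish the unconstrained statement that $\theta\colon \Map_0(Y,X_G)/{\simeq}\to \Hom(\pi_1(Y),G)$ is a bijection. For surjectivity, given $\varphi\in\Hom(\pi_1(Y),G)$, I build a representative by sending the $1$-skeleton $Y^{(1)}$ to loops realizing $\varphi$ on the generators; each $2$-cell is attached along a word mapping to the identity of $G=\pi_1(X_G)$, so its image is a null-homotopic loop and the map extends over that cell. For injectivity, suppose $f_\ast=f'_\ast$. On the graph $Y^{(1)}$, based maps into $X_G$ are classified up to based homotopy by their induced map on $\pi_1$, so after a based homotopy I may assume $f=f'$ on $Y^{(1)}$. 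The obstruction to a further homotopy rel $Y^{(1)}$ making them agree on the $2$-cells is the difference class in $H^2(Y;\pi_2(X_G))=H^2(Y;0)=0$; hence $f\simeq f'$ rel basepoint. This is the only place where $\pi_2(X_G)=0$, together with $\dim Y\le 2$, is used, and it is where higher homotopy of $X_G$ could otherwise have intervened.

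It remains to match the two subsets under this bijection. The inclusion $\iota\colon \partial M\setminus P\hookrightarrow M\setminus\Gamma$ has domain homotopy equivalent to a wedge of circles: $\partial M$ is a closed orientable surface, and deleting the finite set $P$ yields an open surface, which is homotopy equivalent to a $1$-complex. For maps out of a $1$-complex, free homotopy classes are classified by conjugacy classes of homomorphisms on $\pi_1$. Consequently $f|_{\partial M\setminus P}$ is freely homotopic to $f_0$ if and only if $(f|_{\partial M\setminus P})_\ast=f_\ast\circ\iota_\ast$ is conjugate to $(f_0)_\ast$ in $G$, the conjugation absorbing the choice of path from $\ast$ to $\ast_\partial$ implicit in $\iota_\ast$. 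This shows simultaneously that $\theta$ carries $\Map_0(M\setminus\Gamma,X_G;f_0)/{\simeq}$ into the indicated set of homomorphisms (well-definedness), and that every such homomorphism $\varphi$ is realized by some $f$ with $f_\ast=\varphi$ whose boundary restriction is then freely homotopic to $f_0$ (surjectivity).

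Finally, since the defining condition on $\Map_0(M\setminus\Gamma,X_G;f_0)$ is preserved under based homotopy, this subset is a union of based-homotopy classes, so its quotient embeds into $\Map_0(M\setminus\Gamma,X_G)/{\simeq}$, and injectivity of the restricted $\theta$ is inherited from the unconstrained case. I expect the main obstacle to be bookkeeping rather than conceptual: carefully separating basepoint-preserving homotopy in the interior from free homotopy on the boundary, and tracking the conjugation coming from the path joining $\ast$ to $\ast_\partial$, so that the ``conjugate to $(f_0)_\ast$'' condition lines up precisely with the free-homotopy boundary condition. The geometric reduction of $M\setminus\Gamma$ to a $2$-complex should also be stated with care, as it is exactly what makes the hypothesis $\pi_2(X_G)=0$ sufficient.
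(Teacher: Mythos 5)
Your argument is correct in substance but follows a genuinely different route from the paper. The paper does not run obstruction theory by hand: it embeds $X_G$ into a $K(G,1)$ by attaching cells of dimension greater than $3$, observes that since $M\setminus\Gamma$ is homotopy equivalent to a $2$-complex the inclusion $X_G\subset K(G,1)$ induces a bijection on based homotopy sets, and then simply cites the standard Eilenberg--MacLane classification, once for based maps out of $M\setminus\Gamma$ (giving injectivity of $\theta$) and once, in the free/conjugacy form $\theta'\colon \Map(\partial M\setminus P, K(G,1))/{\simeq}\to \Hom(\pi_1(\partial M\setminus P),G)/\conj$, inside a commutative square $\iota^\ast$-restriction diagram (giving surjectivity). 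Your direct cell-by-cell construction and difference-cochain argument on a $2$-complex $Y$ re-proves exactly the special cases of that classification that are needed, so your proof is more self-contained and makes visible precisely where $\pi_2(X_G)=0$ enters; the paper's proof is shorter and, importantly, treats the interior (based) and boundary (free) classifications uniformly, since the $K(G,1)$ statement applies to any CW domain. Your justification that $M\setminus\Gamma$ is homotopy equivalent to a $2$-complex (regular neighborhood plus a handle decomposition without $3$-handles) is a detail the paper asserts without proof, so that part of your write-up is if anything more careful.

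One concrete flaw to repair: your boundary step asserts that $\partial M\setminus P$ is homotopy equivalent to a $1$-complex because ``deleting the finite set $P$ yields an open surface.'' The paper allows $P=\emptyset$ (this case is used throughout Appendix~\ref{sec: Classification example}), and then $\partial M\setminus P=\partial M$ is a \emph{closed} surface, not homotopy equivalent to a graph, so the classical ``maps out of a $1$-complex up to free homotopy $=$ conjugacy classes of homomorphisms'' does not literally apply. The fix is immediate with your own tools: a closed surface is a $2$-complex, and since $\pi_2(X_G)=0$ your obstruction-theoretic argument shows that based homotopy classes of maps from any $2$-complex into $X_G$ biject with homomorphisms on $\pi_1$ (the surface relator maps to $1\in G$, so the map extends over the $2$-cell, and the difference obstruction vanishes), whence free classes biject with conjugacy classes. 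With that one sentence added, your proof goes through in all cases; note the paper avoids this case split automatically because its $\theta'$ is stated for the aspherical target $K(G,1)$.
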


\begin{proof}
First, note that we need to choose a path connecting $\ast$ and $\ast_\partial$ to define the induced homomorphism $\iota_\ast$, 
but the condition on the right-hand side does not depend on the choice of that path.
Let $K(G,1)$ be an Eilenberg-MacLane space obtained from $X_G$ by attaching cells of dimension greater than $3$ (see the construction in \cite[Section~4.2]{Hat02}).
Since $M\setminus \Gamma$ is homotopy equivalent to a CW complex of dimension $2$, the inclusion $X_G\subset K(G,1)$ induces a bijection $\Map_0(M \setminus \Gamma, X_G; f_0)/{\simeq} \to \Map_0(M \setminus \Gamma, K(G,1); f_0)/{\simeq}$.
By a property of Eilenberg-MacLane spaces, the natural map $\Map_0(M \setminus \Gamma, K(G,1))/{\simeq} \to \Hom(\pi_1(M \setminus \Gamma,\ast),G)$ is a bijection, and it restricts to the map $\theta$ on $\Map_0(M \setminus \Gamma, X_G; f_0)/{\simeq}$. 
This shows that $\theta$ is injective.

Now it suffices to show the surjectivity of $\theta$. Consider the following commutative diagram, where the horizontal arrows are bijections:
\[
\xymatrix{
\Map_0(M \setminus \Gamma, K(G,1))/{\simeq} \ar[r] \ar[d]^-{\iota^\ast} & \Hom(\pi_1(M \setminus \Gamma,\ast),G) \ar[d]^-{\iota^\ast} \\
\Map(\partial M \setminus P, K(G,1))/{\simeq} \ar[r]^-{\theta'} & \Hom(\pi_1(\partial M \setminus P,\ast_\partial),G)/\conj,
}
\]
Let $\varphi$ be an element of $\Hom(\pi_1(M \setminus \Gamma,\ast),G)$ such that $\varphi\circ\iota_\ast$ is conjugate to $(f_0)_\ast$, and let 
$f\colon M \setminus \Gamma \to K (G, 1)$ be the corresponding continuous map (up to homotopy).
Since the map $\theta'$ appearing in the diagram is a bijection, 
we have $[f\circ\iota] = \iota^\ast([f]) = [f_0]$, that is, the map $f\circ\iota$ is homotopic to $f_0$. 
\end{proof}

%%%
\subsection{Diagrammatic description}
\label{subsec:Diagrammatic_description}
For a positive integer $k$, a $k$-\emph{rose} is a graph consisting of a single 
vertex and $k$ loops.
For technical reasons, in this paper, we define a $0$-rose to be an interval with a basepoint.
We call a diagram of a spatial $k$-rose having no crossings 
the \emph{standard diagram} of a spatial $k$-rose.
Given the handlebody $M$ of genus $k$ as above, fix a deformation retraction of its exterior $\{h_t\colon W\to W\}_{t\in [0,1]}$ onto a $k$-rose $B$ such that $h_t|_{\partial M}$ is an embedding for $t\in [0,1)$, $h_1$ sends $\ast_\partial$ to the unique vertex (basepoint) of the rose, and $h_1|_{P\cup\{\ast_\partial\}}$ is injective.
We can identify $S^3\setminus \Int N(B)$ with $M = E(W)$, where $N(B)$ denotes a regular neighborhood of $B$.
When a spatial graph $\Gamma$ is given, we extend the edges incident to the boundary by $h_t$ until they reach $B$ and denote the resulting graph by $\Gamma$ again.
Then $S^3\setminus(B\cup \Gamma)$ is homeomorphic to $\Int M\setminus \Gamma$.
Consider a diagram of the graph $B \cup \Gamma$ such that the subdiagram corresponding 
to $B$ is standard, as on the right in Figure~\ref{fig:loops}.

Let $\Lambda=\{\alpha_1,\dots,\alpha_{k}, \beta_1,\dots,\beta_{k}, \gamma_1,\dots,\gamma_n\}$ be oriented loops on $\partial M$, based at $\ast_\partial$, such that their images under $h_\varepsilon$ are as illustrated in Figure~\ref{fig:loops} for $\varepsilon>0$ small enough. 
The loops $\alpha_j, \beta_k, \gamma_l$ form a generating set for $\pi_1(\partial M \setminus P, \ast_\partial)$. 
Note that, in the diagrammatic model illustrated on the right in Figure~\ref{fig:loops}, 
the loops $\alpha_1, \ldots, \alpha_g$ are not null-homotopic, although they may appear so---when we draw $B$, we really mean its $\varepsilon$-neighborhood $N(B)$ with basepoint on its boundary.
\begin{figure}[htbp]
\centering\includegraphics[width=1\textwidth]{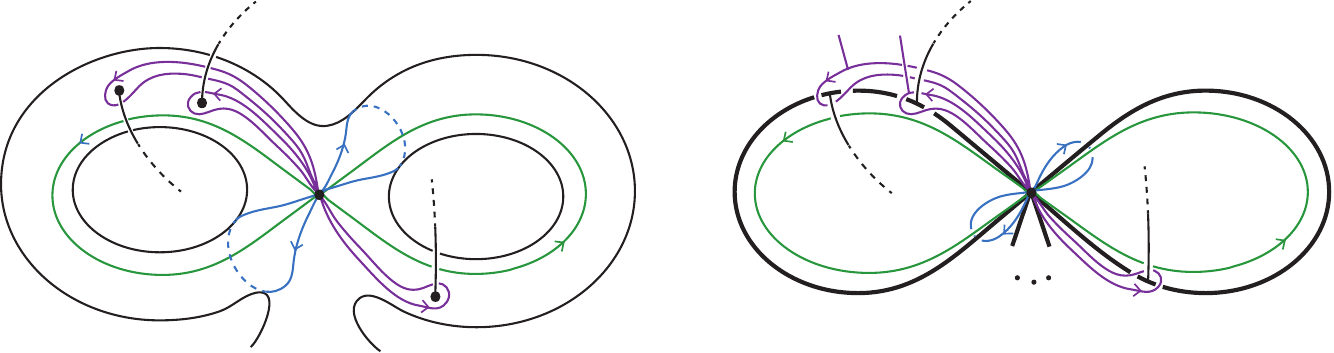}
\begin{picture}(400,0)(0,0)
\put(36,92){\color{violet} $\gamma_1$}
\put(8,95){\color{violet} $\gamma_2$}
\put(-1,80){\color{teal} $\beta_1$}
\put(83,40){\color{blue} $\alpha_1$}

\put(90,94){\color{blue} $\alpha_g$}
\put(168,39){\color{teal} $\beta_g$}
\put(132,27){\color{violet} $\gamma_n$}

\put(82,10){$W$}

\put(272,117){\color{violet} $\gamma_1$}
\put(247,117){\color{violet} $\gamma_2$}
\put(237,70){\color{teal} $\beta_1$}
\put(304,42){\color{blue} $\alpha_1$}

\put(322,85){\color{blue} $\alpha_g$}
\put(384,50){\color{teal} $\beta_g$}
\put(346,23){\color{violet} $\gamma_n$}

\put(255,15){$B$}
\end{picture}
\caption{(Left) The oriented based loops $\alpha_1,\dots,\alpha_{k}, \beta_1,\dots,\beta_{k}, \gamma_1,\dots,\gamma_n$ on the boundary of the exterior handlebody $W$.
(Right) Simplified version of the left figure, which will be used throughout the paper.
Here, $W$ is recovered from the bold rose $B$ by thickening.}
\label{fig:loops}
\end{figure}

\begin{definition}
\label{def:colored_diagram}
A \emph{$G$-colored oriented spatial graph diagram relative to $f_0$} (or simply, \emph{$G$-colored diagram rel $f_0$}) is a diagram $D = D_B \cup D_\Gamma$ of an oriented spatial graph $B\cup \Gamma$ such that
\begin{enumerate}[label=(\arabic*)]
    \item each arc is colored by an element of $G$ (we write $\col(\alpha)$ for the color of an arc $\alpha\in \A_D$ of 
    $D$);
    \item at each crossing or vertex, the colors of the incident arcs satisfy the identities shown in Figure~\ref{fig:wirtinger2}, which come from the Wirtinger relations; and
    \begin{figure}[htbp]
\centering\includegraphics[width=0.70\textwidth]{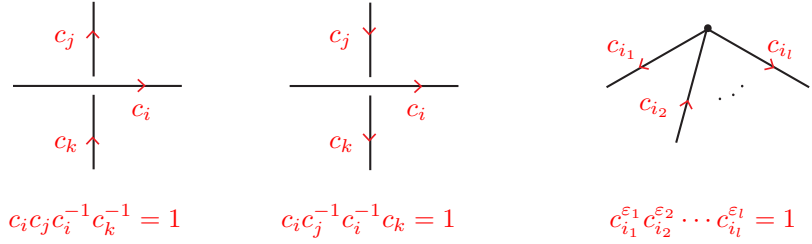}
\begin{picture}(400,0)(0,0)
\put(95,42){\color{red} $c_i$}
\put(66,70){\color{red} $c_j$}
\put(66,30){\color{red}
$c_k$}
\put(49,0){\color{red} $c_i c_j c_i^{-1} c_k^{-1} = 1$}

\put(198,42){\color{red} $c_i$}
\put(169,70){\color{red} $c_j$}
\put(169,30){\color{red} $c_k$}
\put(151,0){\color{red} $c_i c_j^{-1} c_i^{-1} c_k = 1$}

\put(273,66){\color{red} $c_{i_1}$}
\put(285,44){\color{red} $c_{i_2}$}
\put(333,66){\color{red} $c_{i_l}$}
\put(274,0){\color{red} $c_{i_1}^{\varepsilon_1}  c_{i_2}^{\varepsilon_2} \cdots c_{i_l}^{\varepsilon_l} = 1$}
\end{picture}
\caption{Colors around crossings and vertices.}
\label{fig:wirtinger2}
\end{figure}
    \item there exists $g\in G$ such that, for each loop $\lambda \in \Lambda$, $g[f_0( \lambda )]g^{-1}=\prod_{\alpha} \col(\alpha)^{\varepsilon_\alpha} \in G$, where $\alpha$ runs over all arcs crossing over $\lambda$ and $\varepsilon_\alpha=1$ (resp.\ $\varepsilon_\alpha=-1$) if the crossing is positive (resp.\ negative). \label{item:basis-loop-identity}
\end{enumerate}
\end{definition}

Note that $g$ in \ref{item:basis-loop-identity} is unique as an element of $G/C_G(\Im(f_0)_\ast)$. 
The left quotient appearing in the definition of $\SS_{G,\Im(f_0)_\ast,N_{\beta\gamma}}$ comes from the equivalence between diagrams.

\begin{example}
In \Cref{fig:loops_ex}, the right-hand side of the identity in \ref{item:basis-loop-identity} for $\lambda=\gamma_2$ is $c_9^{-1}c_7c_9$.
\begin{figure}[htbp]
\centering\includegraphics[width=0.98\textwidth]{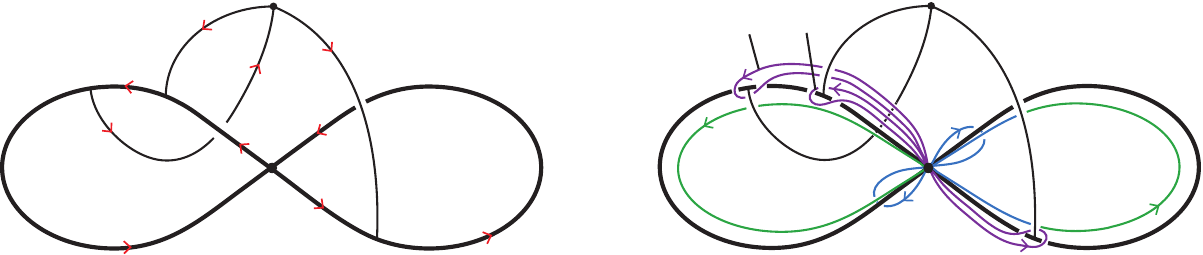}
\begin{picture}(400,0)(0,0)
\put(69,41){\color{red} $c_1$}
\put(33,76){\color{red} $c_2$}
\put(30,5){\color{red} $c_3$}
\put(95,20){\color{red} $c_4$}
\put(158,8){\color{red} $c_5$}
\put(95,60){\color{red} $c_6$}
\put(18,50){\color{red} $c_7$}
\put(86,73){\color{red} $c_8$}
\put(53,95){\color{red} $c_9$}
\put(110,85){\color{red} $c_{10}$}

\put(268,94){\color{violet} $\gamma_1$}
\put(245,94){\color{violet} $\gamma_2$}
\put(237,48){\color{teal} $\beta_1$}
\put(303,22){\color{blue} $\alpha_1$}
\put(320,61){\color{blue} $\alpha_2$}
\put(384,33){\color{teal} $\beta_2$}
\put(340,6){\color{violet} $\gamma_3$}

\end{picture}
\caption{An example of a $G$-colored diagram for $k=2$, $n=3$, and the oriented loops.}
\label{fig:loops_ex}
\end{figure}
\end{example}

\begin{figure}[htbp]
\centering\includegraphics[width=12cm]{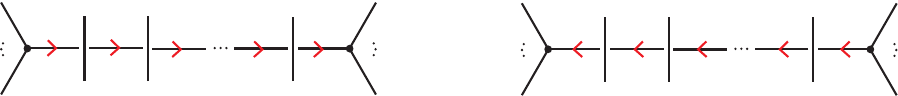}
\begin{picture}(400,0)(0,0)

\put(45,38){\color{red} $c_1$}
\put(69,38){\color{red} $c_2$}
\put(147,38){\color{red} $c_n$}

\put(242,38){\color{red} $c_1^{-1}$}
\put(266,38){\color{red} $c_2^{-1}$}
\put(344,38){\color{red} $c_n^{-1}$}

\put(190,26){$\longleftrightarrow$}
\end{picture}
\caption{Orientation-reversal.}
\label{fig:local_move_1}
\end{figure}

\begin{figure}[htbp]
\centering\includegraphics[width=14cm]{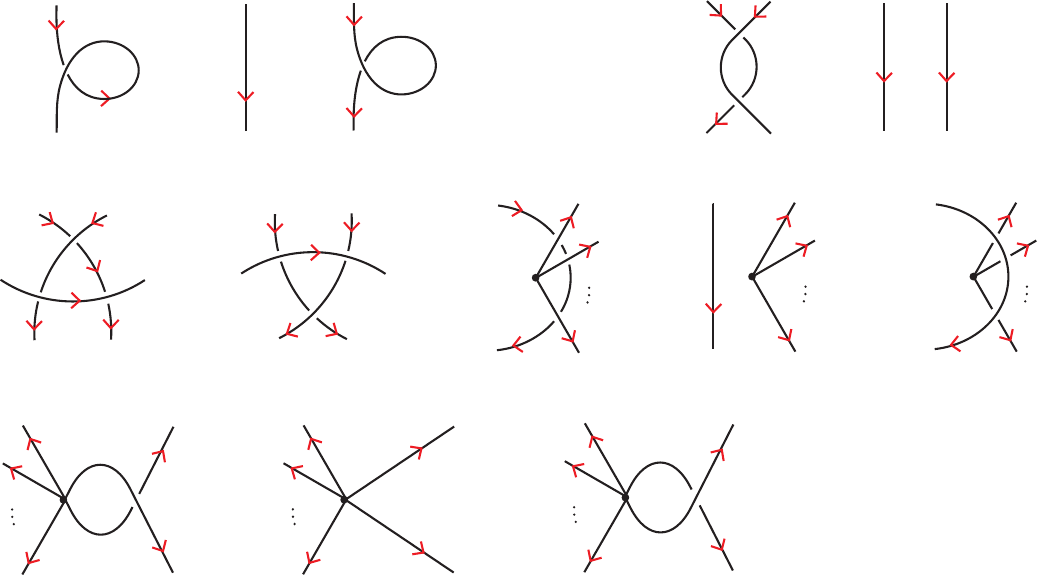}
\begin{picture}(400,0)(0,0)

\put(8,222){\color{red} $c_1$}
\put(36,184){\color{red} $c_1$}
\put(80,195){\color{red} $c_1$}
\put(124,190){\color{red} $c_1$}
\put(124,223){\color{red} $c_1$}
\put(63,205){$\overset{\RR_1}{\longleftrightarrow}$}
\put(108,205){$\overset{\RR_1}{\longleftrightarrow}$}

\put(262,225){\color{red} $c_1$}
\put(300,225){\color{red} $c_2$}
\put(262,185){\color{red} $c_1$}
\put(305,205){$\overset{\RR_2}{\longleftrightarrow}$}
\put(328,200){\color{red} $c_1$}
\put(370,200){\color{red} $c_2$}

\put(13,155){\color{red} $c_1$}
\put(35,155){\color{red} $c_2$}
\put(25,107){\color{red} $c_3$}
\put(65,125){$\overset{\RR_3}{\longleftrightarrow}$}
\put(92,145){\color{red} $c_1$}
\put(141,145){\color{red} $c_2$}
\put(116,142){\color{red} $c_3$}

\put(198,158){\color{red} $b$}
\put(198,88){\color{red} $b$}
\put(212,155){\color{red} $c_1$}
\put(225,144){\color{red} $c_2$}
\put(225,102){\color{red} $c_n$}
\put(240,125){$\overset{\RR_4}{\longleftrightarrow}$}
\put(262,111){\color{red} $b$}
\put(295,156){\color{red} $c_1$}
\put(307,145){\color{red} $c_2$}
\put(307,103){\color{red} $c_n$}
\put(327,125){$\overset{\RR_4}{\longleftrightarrow}$}
\put(365,88){\color{red} $b$}
\put(379,155){\color{red} $c_1$}
\put(392,144){\color{red} $c_2$}
\put(392,102){\color{red} $c_n$}

\put(0,63){\color{red} $c_1$}
\put(-5,45){\color{red} $c_2$}
\put(0,20){\color{red} $c_n$}
\put(68,55){\color{red} $b_2$}
\put(68,22){\color{red} $b_1$}
\put(78,40){$\overset{\RR_5}{\longleftrightarrow}$}
\put(107,63){\color{red} $c_1$}
\put(102,45){\color{red} $c_2$}
\put(107,20){\color{red} $c_n$}
\put(155,67){\color{red} $b_2$}
\put(155,12){\color{red} $b_1$}
\put(182,40){$\overset{\RR_5}{\longleftrightarrow}$}
\put(215,63){\color{red} $c_1$}
\put(210,45){\color{red} $c_2$}
\put(215,20){\color{red} $c_n$}
\put(283,55){\color{red} $b_2$}
\put(283,22){\color{red} $b_1$}
\end{picture}
\caption{Reidemeister moves: The colors of the short `local' arcs are omitted because they are uniquely determined by the colors of the other arcs, according to the 
appropriate Wirtinger relators.
}
\label{fig:local_move_2}
\end{figure}
\begin{figure}[htbp]
    \centering
    \includegraphics[width=1\textwidth]{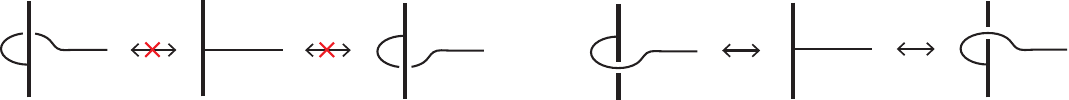}
    \caption{The forbidden $\RR_5$ moves (left) and examples of (allowed) $\RR_5$ moves (right), where vertical bold lines are parts of $B$.}
    \label{fig:forbidden_R5}
\end{figure}

We here introduce local moves among $G$-colored diagrams $D = D_B \cup D_\Gamma$. 
The diagram $D_B$ of the rose $B$ is assumed to be fixed during those moves. 
See \cite{NKTK24} for more detailed description. 

\begin{definition}
\label{def:moves}
\begin{enumerate}[label=(\arabic*)]
    \item An \emph{orientation-reversal} for $D$, 
    which is described in \Cref{fig:local_move_1}. 

    \item \emph{Reidemeister moves} for $D$, which are described in \Cref{fig:local_move_2}, except for the forbidden $\RR_5$ moves shown in \Cref{fig:forbidden_R5}.

    \item An \emph{edge-contraction} or a \emph{vertex-splitting} for $D_\Gamma$.
    As special cases, we obtain an \emph{edge-contraction} or an \emph{edge-subdivision}.
    See Figure~\ref{fig:local_moves_3-4}. 
    
    \item An \emph{edge-addition} or an \emph{edge-deletion} for $D_\Gamma$, which are described in 
    Figure~\ref{fig:local_move_5}. 

    \item A \emph{vertex-addition} or a \emph{vertex-deletion} for $D_\Gamma$, which adds or deltes an isolated vertex. 
    
    \item A \emph{simultaneous conjugation} for $D$, 
    which replaces all colors $c_1, c_2,\ldots, c_n$ with their conjugates $x c_1 x^{-1}, x c_2 x^{-1}, \ldots, x c_n x^{-1}$ 
    by a fixed element $x \in G$ simultaneously. 
\end{enumerate}
\end{definition}

\begin{figure}[h]
\centering\includegraphics[width=15cm]{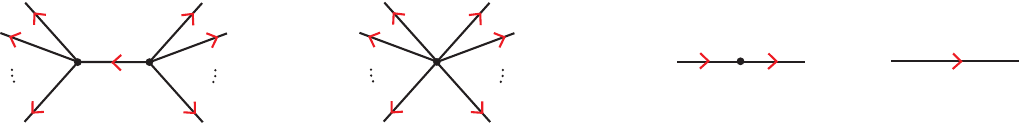}
\begin{picture}(400,0)(0,0)
\put(277,43){\color{red} $c$}
\put(306,43){\color{red} $c$}
\put(383,43){\color{red} $c$}
\put(331,35){$\longleftrightarrow$}

\put(2,62){\color{red} $b_1$}
\put(-18,39){\color{red} $b_2$}
\put(-14,18){\color{red} $b_m$}

\put(19,27){\color{red} $b_1 \cdots b_m$}

\put(56,64){\color{red} $c_1$}
\put(76,39){\color{red} $c_2$}
\put(71,20){\color{red} $c_n$}

\put(97,35){$\longleftrightarrow$}

\put(155,62){\color{red} $b_1$}
\put(135,39){\color{red} $b_2$}
\put(137,18){\color{red} $b_m$}
\put(178,62){\color{red} $c_1$}
\put(197,39){\color{red} $c_2$}
\put(193,18){\color{red} $c_n$}
\end{picture}
\vspace{-1em}
\caption{An edge-contraction and a vertex-splitting (left); 
an edge-combining and an edge-subdivision (right).}
\label{fig:local_moves_3-4}
\end{figure}

\begin{figure}[h]
\centering\includegraphics[width=8cm]{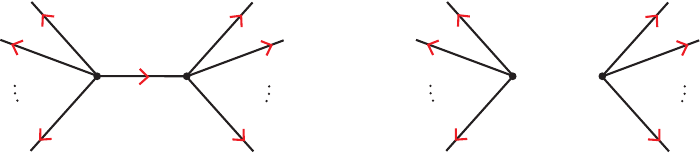}
\begin{picture}(400,0)(0,0)
\put(102,62){\color{red} $b_1$}
\put(82,37){\color{red} $b_2$}
\put(84,18){\color{red} $b_m$}
\put(157,62){\color{red} $c_1$}
\put(172,37){\color{red} $c_2$}
\put(169,18){\color{red} $c_n$}

\put(188,35){$\longleftrightarrow$}

\put(238,62){\color{red} $b_1$}
\put(217,37){\color{red} $b_2$}
\put(220,18){\color{red} $b_m$}
\put(130,24){\color{red} $1$}
\put(290,62){\color{red} $c_1$}
\put(305,37){\color{red} $c_2$}
\put(305,18){\color{red} $c_n$}
\end{picture}
\vspace{-1em}
\caption{Edge-addition and edge-deletion.}
\label{fig:local_move_5}
\end{figure}

%%%
\section{Main results and examples}
\label{sec:main_result}
For a $G$-colored diagram $D$ rel $f_0$, let $H_D$ denote the subgroup $\ang{\col(\alpha)\mid \alpha\in \A_{D_B\cup D_\Gamma}}$ of $G$ and let $N_D$ denote the normal closure of $\{\col(\alpha)\mid \alpha\in \A_{D_\Gamma}\}$ in $H_D$.
One can define a map
\[
\Psi\colon \{\text{$G$-colored diagrams rel $f_0$}\}/\text{(1)--(5)} \to \SS_{G,\Im(f_0)_\ast,N_{\beta\gamma}}
\]
by $\Psi(D)=(H_D, N_D, [g])$, where $g$ is an element appearing in \Cref{def:colored_diagram}\ref{item:basis-loop-identity}.
Then, $\Psi$ is well defined since $H_D$, $N_D$, and $[g]$ are invariant under moves (1)--(5).
Note that move (2) changes $g$ to $hg$ for some $h\in N_D$ when an arc goes over the vertex of $D_B$, but these two elements are the same in $N_D \backslash G/C_G(\Im(f_0)_\ast)$.

The map $\Psi$ fits into a commutative diagram
\[
\xymatrix{
\{\text{$G$-colored diagrams rel $f_0$}\}/\text{(1)--(5)} \ar[r]^-{\Psi} \ar@{->>}[d] & \SS_{G,\Im(f_0)_\ast,N_{\beta\gamma}}, \\
\Bigl(\coprod_\Gamma \Map_0( M \setminus \Gamma, X_G; f_0)\Bigr)/{\sim} \ar[ur]_-{\Phi} &
}
\]
where the vertical map is induced by Lemma~\ref{lem:MapHom}.
Indeed, for a $G$-colored diagram $D$, we obtain an element of $\{\varphi\in \Hom(\pi_1( M \setminus \Gamma,\ast),G)\mid \text{$(f_0)_\ast$ is conjugate to $\varphi\circ\iota_\ast$}\}$ due to the condition on colors in Definition~\ref{def:colored_diagram}, then Lemma~\ref{lem:MapHom} gives an element of $\Map_0( M \setminus \Gamma, X_G; f_0)/{\simeq}$.

\begin{theorem}
\label{thm:comb_classif}
The map $\Psi$ is a bijection.
\end{theorem}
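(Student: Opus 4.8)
The plan is to show that $\Psi$ is a bijection by establishing surjectivity and injectivity separately, adapting the diagrammatic argument behind Theorem~\ref{thm:NKTK24} and adding the bookkeeping needed to track $N_D$ and $[g]$. Recall that $\Psi$ is already well defined, and that each of the moves (1)--(5) preserves the triple $(H_D,N_D,[g])$ on the nose: orientation-reversal, Reidemeister moves, edge-contraction/vertex-splitting, and edge/vertex-addition/deletion change neither the subgroup $H_D=\ang{\col(\alpha)}$ generated by all colors nor its normal subgroup $N_D$ generated by the defect colors. The element $g$ is already well defined modulo $C_G(\Im(f_0)_\ast)$ by Definition~\ref{def:colored_diagram}\ref{item:basis-loop-identity}, and as noted after the definition of $\Psi$, move~(2) can only change it by left multiplication by an element of $N_D$; hence $[g]$ is fixed as a class in $N_D\backslash G/C_G(\Im(f_0)_\ast)$. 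Throughout, $D_B$ is held rigid and simultaneous conjugation (move~(6)) is \emph{not} allowed, which is exactly what makes $\Psi$ land in the set of triples rather than in their conjugacy classes.

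For surjectivity, given $(H,N,[g])\in\SS_{G,\Im(f_0)_\ast,N_{\beta\gamma}}$ I would build a diagram realizing it. Condition~(2) in the definition of $\SS_{G,S,S'}$ supplies a finite set $T=\{t_1,\dots,t_m\}\subset N$ with $\ang{g\,\Im(f_0)_\ast\,g^{-1}\cup T}=H$ and $\angg{T}_H=N$. Write $a_i=g\,(f_0)_\ast(\alpha_i)\,g^{-1}$, $b_i=g\,(f_0)_\ast(\beta_i)\,g^{-1}$, $c_j=g\,(f_0)_\ast(\gamma_j)\,g^{-1}$; condition~(1) gives $b_i,c_j\in g N_{\beta\gamma}g^{-1}\subset N$. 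Take $D_B$ standard, color its $i$-th petal by the essential handle monodromy $a_i$ (these colors lie outside $N_D$), and attach a defect graph $\Gamma$ whose arcs are colored by $T$ on $m$ small meridian loops and by $b_i,c_j$ on defect arcs linking $\beta_i,\gamma_j$. Then $H_D=\ang{a_i,b_i,c_j,T}=\ang{g\,\Im(f_0)_\ast\,g^{-1}\cup T}=H$, while, since $b_i,c_j\in N=\angg{T}_{H_D}$, we get $N_D=\angg{T\cup\{b_i,c_j\}}_{H_D}=\angg{T}_{H_D}=N$. By construction Definition~\ref{def:colored_diagram}\ref{item:basis-loop-identity} holds with the prescribed $g$, so $\Psi(D)=(H,N,[g])$; verifying the Wirtinger identities at the newly introduced crossings and vertices is routine.

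Injectivity is the substantive part. Given diagrams $D,D'$ with $\Psi(D)=\Psi(D')=(H,N,[g])$, I would first use moves (3)--(5) together with Reidemeister moves to bring each of $D_\Gamma$, $D'_\Gamma$ into a \emph{standard position}: a disjoint union of small based meridian loops lying just outside $N(B)$, together with the arcs forced by the boundary defects $P$, while keeping $D_B$ standard and fixed. In this position the diagram is encoded by the handle monodromies $g\,(f_0)_\ast(\alpha_i)\,g^{-1}$ carried on the petals of $D_B$, together with the finite list of colors on the meridian loops, which normally generate $N$ inside $H$. It then remains to connect any two such standard forms with the same $(H,N,[g])$. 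The engine is that moves (1)--(5) realize a complete set of elementary transformations of a finite normal generating set of $N$ inside $H$: orientation-reversal inverts a color; dragging a meridian loop across arcs (Reidemeister moves) conjugates its color by any element of $H_D$; edge-combining/vertex-splitting replaces a pair $n_r,n_s$ by their product; and edge/vertex-addition and deletion insert or remove a loop colored $1$, hence---after building it up from the others---any loop colored by an element of the current normal closure. Combined with the group-theoretic fact that any two finite normal generating sets of a fixed $N\trianglelefteq H$ become related by such elementary transformations after stabilization, this connects the two meridian systems, and with them $D$ and $D'$.

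The main obstacle is precisely the geometric realization step above: one must carry out the normal-generating-set transformations using only the permitted moves, that is, while keeping $D_B$ rigid, never performing a forbidden $\RR_5$ move of Figure~\ref{fig:forbidden_R5}, and never invoking simultaneous conjugation. Respecting these constraints is what guarantees that $g$ is modified only within its double coset in $N_D\backslash G/C_G(\Im(f_0)_\ast)$ and that the handle monodromies $g\,(f_0)_\ast(\alpha_i)\,g^{-1}$ are preserved \emph{exactly}, rather than merely up to conjugacy---this is the extra information distinguishing $\SS_{G,\Im(f_0)_\ast,N_{\beta\gamma}}$ from $\SS^\fg_G$. A secondary technical point is the consistent treatment of the arcs terminating at the boundary defects $P$ and of the loops $\beta_i,\gamma_j$, whose monodromies lie in $N$ and must be tracked through the reduction so that Definition~\ref{def:colored_diagram}\ref{item:basis-loop-identity} continues to hold at every stage.
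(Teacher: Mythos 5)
Your overall architecture is the same as the paper's: well-definedness of $\Psi$ under moves (1)--(5), surjectivity via an explicit standard-form realization, and injectivity via reduction to a normal form followed by elementary transformations of finite normal generating sets of $N$ in $H$ --- the last step being exactly what the paper imports from \cite[Figure~16]{NKTK24}. Your surjectivity half essentially reproduces the paper's construction (your colors $b_i, c_j$ are the ones that Lemma~\ref{lem:colors of a diagram coming from the boudnary} shows are forced by $f_0$ and $g$), up to one incorrect aside: the petal colors $a_i$ need \emph{not} lie outside $N_D$ --- when $H=N$, as for every triple in $\SS_{Q,\ang{i},\ang{i}}$, they lie inside it. That remark is not load-bearing, but it signals the real problem below.

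The genuine gap is in injectivity, and it sits precisely where you defer the boundary bookkeeping as ``secondary.'' First, your standard position --- small free meridian loops plus the arcs forced by $P$ --- is not attainable in general: if $(f_0)_\ast(\beta_i)\neq 1$ (e.g.\ the configurations with $(f_0)_\ast(\beta)=-1$ in Appendix~\ref{sec: Classification example}, where $P=\emptyset$), the defect graph must contain a loop hooked around the $i$th edge of $B$, because $\beta_i$ bounds a disk in $M$, and if every defect component were a small loop unlinked from $B$, then $\beta_i$ would bound a disk in the complement of the defect, forcing the right-hand side of Definition~\ref{def:colored_diagram}\ref{item:basis-loop-identity} for $\lambda=\beta_i$ to be trivial. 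Accordingly, the paper's standard form contains one hooked loop per edge of $D_B$ (the set $D_\Gamma^{(2)}$) and one arc per boundary defect ($D_\Gamma^{(3)}$), reached by the moves of Figures~\ref{fig:hook1}--\ref{fig:hook3}; and --- crucially --- Lemma~\ref{lem:colors of a diagram coming from the boudnary} shows the colors on $D_\Gamma^{(2)}\cup D_\Gamma^{(3)}$ are \emph{completely determined} by $f_0$ and $g$, so they are not free parameters that your normal-generating-set engine may transform. Second, you assert but never realize the alignment of representatives: two standard forms with the same triple come with $g'=hgs$ for some $h\in N$ and $s\in C_G(\Im(f_0)_\ast)$, and one must geometrically create a loop colored $h$ (Figure~\ref{fig:conjugate}) and drag it over the vertex of $D_B$ (Figure~\ref{fig:returning}) to replace $g'$ by $gs$, after which centrality of $s$ guarantees the determined colors coincide. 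Note that this dragging move conjugates the petal colors coherently with the change of $g$, so your claim that the handle monodromies are preserved ``exactly'' holds only in that coupled sense. The free-loop machinery you develop in detail is the part already available from the boundary-free case; without the hooked loops, the color-determination lemma, and the $g$-alignment, the injectivity argument does not close.
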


Now, Theorem~\ref{thm:top_classif} follows from Theorem~\ref{thm:comb_classif} and the above commutative diagram.
Indeed, Theorem~\ref{thm:comb_classif} implies that the vertical map is a bijection, and hence $\Phi$ is also a bijection.

If we ignore basepoints, the above commutative diagram implies the following consequence.

\begin{corollary}
\label{cor:comp_classif_conjugacy}
The diagram
\[
\xymatrix{
\{\text{$G$-colored diagrams rel $f_0$}\}/\textup{(1)--(6)} \ar[r]^-{\Psi} \ar@{->>}[d] & \SS_{G,\Im(f_0)_\ast,N_{\beta\gamma}}/\conj \\
\Bigl(\coprod_\Gamma \Map(M\setminus \Gamma, X_G; f_0)\Bigr)/{\sim} \ar[ur]_-{\Phi} &
}
\]
is commutative, where the three maps are bijections.
Here $\SS_{G,\Im(f_0)_\ast,N_{\beta\gamma}}/\conj$ is the quotient set obtained by identifying $(H,N,[g])$ and $(H',N',[g'])$ if $(xHx^{-1},xNx^{-1},[xg])=(H',N',[g'])$ for some $x\in G$.
\end{corollary}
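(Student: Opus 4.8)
The plan is to derive the corollary by quotienting the basepointed commutative diagram established immediately above—whose three arrows are bijections by Theorem~\ref{thm:comb_classif}—by the conjugation action of $G$. I would let $G$ act on $\SS_{G,\Im(f_0)_\ast,N_{\beta\gamma}}$ by $x\cdot(H,N,[g])=(xHx^{-1},xNx^{-1},[xg])$, transport this action through the two basepointed bijections to an action by move~(6) on the diagram side and to an action on $\bigl(\coprod_\Gamma\Map_0(M\setminus\Gamma,X_G;f_0)\bigr)/{\sim}$, and then identify the three orbit spaces with the three objects appearing in the statement. Since the three basepointed maps are bijections and, as I explain below, equivariant, the induced maps on orbit spaces are automatically bijections and the resulting diagram commutes as the orbit-space image of a commutative one.

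First I would check that $\Psi$ intertwines move~(6) with the conjugation relation on triples. Replacing every color $c$ by $xcx^{-1}$ sends $H_D=\ang{\col(\alpha)}$ to $xH_Dx^{-1}$ and the normal closure $N_D$ to $xN_Dx^{-1}$; reading the identity $g[f_0(\lambda)]g^{-1}=\prod_\alpha\col(\alpha)^{\varepsilon_\alpha}$ of \Cref{def:colored_diagram}\ref{item:basis-loop-identity} after this substitution, the right-hand side becomes $(xg)[f_0(\lambda)](xg)^{-1}$, so $g$ may be replaced by $xg$. Hence simultaneous conjugation by $x$ carries $\Psi(D)=(H_D,N_D,[g])$ to $(xH_Dx^{-1},xN_Dx^{-1},[xg])$, which is precisely the identification defining $\SS_{G,\Im(f_0)_\ast,N_{\beta\gamma}}/\conj$. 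Because $\Psi$ is a bijection on the (1)--(5) quotient, this shows the transported action there is exactly move~(6), so the diagram-side orbit space is $\{\text{$G$-colored diagrams rel $f_0$}\}/\textup{(1)--(6)}$ and $\Psi$ descends to a bijection onto $\SS_{G,\Im(f_0)_\ast,N_{\beta\gamma}}/\conj$. (Concretely, injectivity of the descended map is seen as follows: if the triples of $D$ and $D'$ are conjugate by $x$, then applying move~(6) by $x$ to $D$ produces a diagram with the same triple as $D'$, which by injectivity of $\Psi$ is related to $D'$ by (1)--(5); thus $D$ and $D'$ are related by (1)--(6).) Equivariance of the vertical map and of $\Phi$ then follows formally from the commutativity of the basepointed triangle together with the injectivity of $\Phi$.

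It remains to identify the orbit space on the map side, and this is the step I expect to be the main obstacle. I must show that the orbit space of the transported action on $\bigl(\coprod_\Gamma\Map_0(M\setminus\Gamma,X_G;f_0)\bigr)/{\sim}$ is $\bigl(\coprod_\Gamma\Map(M\setminus\Gamma,X_G;f_0)\bigr)/{\sim}$, i.e.\ that forgetting basepoints identifies exactly those based classes whose induced homomorphisms differ by conjugation. Via \Cref{lem:MapHom} this reduces to the standard fact that free homotopy classes of maps into an Eilenberg--MacLane space correspond to conjugacy classes of homomorphisms on $\pi_1$; the delicate point is to upgrade it to the relative setting, verifying that a free homotopy realizing a given conjugation can be taken to respect the boundary condition $f_0$ and to be compatible with the ambient-isotopy and graph-modification parts of $\sim$. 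The passage $X_G\subset K(G,1)$, the hypothesis $\pi_2(X_G)=\{0\}$, and the fact that $M\setminus\Gamma$ has the homotopy type of a $2$-complex—all already exploited in the proof of \Cref{lem:MapHom}—make this rigorous. Granting this identification, the induced vertical map is a bijection, the diagram commutes, and $\Phi$ is a bijection as the remaining arrow of a commutative triangle of bijections, which completes the argument.
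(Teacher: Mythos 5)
Your proposal is correct and follows essentially the same route as the paper, which derives the corollary in one line by ignoring basepoints in the commutative triangle of Theorem~\ref{thm:comb_classif}; your verification that move~(6) realizes the action $x\cdot(H,N,[g])=(xHx^{-1},xNx^{-1},[xg])$ and that basepoint-forgetting identifies exactly the conjugation orbits (via the Eilenberg--MacLane correspondence already used in Lemma~\ref{lem:MapHom}) simply makes explicit what the paper leaves implicit.
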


Note that we have a bijection between $\SS_{G,\Im(f_0)_\ast,N_{\beta\gamma}}/\conj$ and the set of pairs $(H,N)$ up to conjugation by an element $yz$, where $y\in N$ and $z\in C_G(\Im(f_0)_\ast)$.

\begin{remark}
\label{rem:constant far-field}
Our classification of defects up to basepoint-preserving homotopy of ordered media in $M$ also covers the case of defects in ordered media with constant far-field on the 3-dimensional space $\R^3$ outside of an unknotted handlebody (through one-point compactification of $\R^3$). 
The corollary covers the case of free homotopy of ordered media in the interior of an (unknotted) handlebody.
\end{remark}

\begin{example}
\label{ex:meaning_of_g}
Let $X_G = S^3 /Q$, genus $k = 1$, and $P = \{ p_1, p_2\}$. 
%Let us consider the case where $X_G=S^3/Q$, the genus of a handlebody is one, 
%and $n=2$.
Fix a continuous map $f_0\colon \partial M\setminus\{p_1,p_2\}\to S^3/Q$ satisfying $(f_0)_\ast(\alpha_1)=(f_0)_\ast(\beta_1)=1$, $(f_0)_\ast(\gamma_1)=i$, and $(f_0)_\ast(\gamma_2)=-i$.
Then, $\Im(f_0)_\ast = N_{\beta\gamma} = \ang{i}$, and thus $H=N$ for any $(H,N,[g])\in \SS_{Q,\ang{i},\ang{i}}$.
It follows that
\[
\SS_{Q,\ang{i},\ang{i}} = \{(\ang{i},\ang{i},[1]),\ (\ang{i},\ang{i},[j]),\ (Q,Q,[1])\}.
\]
Moreover, we have
\[
\SS_{Q,\ang{i},\ang{i}}/\conj = \{(\ang{i},\ang{i},[1]),\ (Q,Q,[1])\}.
\]
\Cref{fig:example_12} shows a diagrammatic realization of each element of $\SS_{Q,\ang{i},\ang{i}}$.
The configurations shown in the left and middle diagrams differ only by $[g]$. The defect configuration in the middle diagram can be obtained from that in the left diagram by dragging the image under $f$ of a neighborhood of the basepoint $\ast$ along the loop in $S^3/Q$ corresponding to $j$.
\end{example}
\begin{figure}[h]
    \centering
    \includegraphics[width=0.65\textwidth]{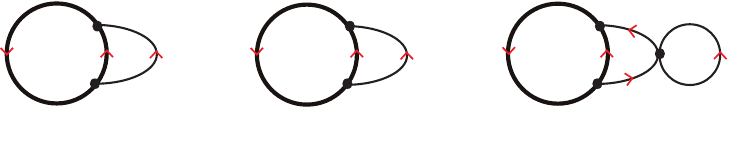}
\begin{picture}(400,0)(0,0)

\put(88,52){\footnotesize \color{red}$-i$}
\put(125,52){\footnotesize \color{red}$i$}
\put(189,52){\footnotesize \color{red}$i$}
\put(219,52){\footnotesize \color{red}$-i$}
\put(278,52){\footnotesize \color{red}$-i$}
\put(300,68){\footnotesize \color{red}$i$}
\put(300,36){\footnotesize \color{red}$i$}
\put(339,52){\footnotesize \color{red}$j$}

\put(54,52){\footnotesize \color{red}$1$}
\put(150,52){\footnotesize \color{red}$1$}
\put(245,52){\footnotesize \color{red}$1$}

\put(62,0){$(\langle  i \rangle, \langle  i \rangle, [1])$}
\put(165,0){$(\langle  i \rangle , \langle i \rangle, [j])$}
\put(275,0){$(Q, Q , [1])$}

\put(87,18){$\updownarrow$}
\put(188,18){$\updownarrow$}
\put(293,18){$\updownarrow$}
\end{picture}
\caption{Three non-equivalent $Q$-colored diagrams rel $f_0$.}
    \label{fig:example_12}
\end{figure}

In \Cref{sec: Classification example}, 
we give a complete classification, 
up to free homotopy, of non-trivial defects of a system defined on the solid torus $M = D^2 \times S^1$ that has 
$S^3 / Q$ as its order parameter space and no boundary defects such that 
the boundary condition is given by $f_0 : \partial M \to S^3 / Q$ with $(f_0)_* (\alpha) = i$. 

%%%
\section{Proof of Theorem~\ref{thm:comb_classif}}
\label{sec:proof_of_theorem}

Recall that $B$ is a $k$-rose in $W$ with a fixed 
deformation retraction $\{h_t\colon W\to W\}_{t\in [0,1]}$ onto $B$ such that $h_t|_{\partial W}$ is an embedding for $t\in [0,1)$, $h_1$ sends $\ast_\partial$ to the unique vertex 
of the rose, and $h_1|_{P\cup\{\ast_\partial\}}$ is injective.
Consider a diagram $D = D_B \cup D_\Gamma$ of the graph $B \cup \Gamma$ such that the the subdiagram $D_B$ corresponding 
to $B$ is standard. 

We say that a $G$-colored diagram $D$ is in a \emph{standard form} if it satisfies the following conditions
(see Figure~\ref{fig:standard_form}):
\begin{figure}[h]
    \centering
    \includegraphics[width=0.50\textwidth]{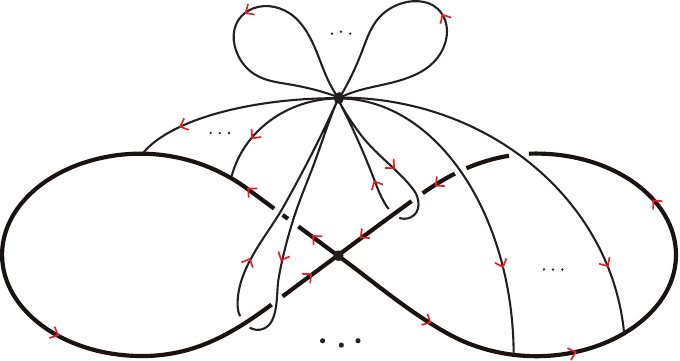}
    \caption{A Standard form of $G$-colored diagram, where the colors on edges are omitted.}
    \label{fig:standard_form}
\end{figure}

\begin{itemize}
\item 
$D_\Gamma$ has a single vertex $v_0$ that does not touch $D_B$; 
\item 
$D_\Gamma$ has no self-crossings; 
\item
for each edge of $D_B$, there exists a unique loop of $D_\Gamma$ that is hooked to that edge, that is, the corresponding loop of $\Gamma$ is freely homotopic to the meridian of the edge of $B$; 
\item
for each point of $D_B$ corresponding to one of the boundary defects $p_1, \ldots, p_n$, there is a unique edge of $D_\Gamma$ connecting $v_0$ and that point; and
\item
at every crossing of $D_B$ and $D_\Gamma$, except the hooks mentioned above, $D_\Gamma$ crosses over $D_B$. 
\end{itemize}
For a diagram $D = D_B \cup D_\Gamma$ in a standard form, 
we divide the edges of $D_\Gamma$ into three as follows: 
\begin{itemize}
\item the set $D_\Gamma^{(1)}$ of loops of $D_\Gamma$ that are not 
hooked to an edge of $D_B$;  
\item the set $D_\Gamma^{(2)}$ of loops of $D_\Gamma$ that are hooked to an edge of $D_B$; and 
\item the set $D_\Gamma^{(3)}$ of edges of $D_\Gamma$ that are not loops. 
\end{itemize}

\begin{lemma}
\label{lem:colors of a diagram coming from the boudnary}
Let $D = D_B \cup D_\Gamma$ be a $G$-colored diagram 
rel $f_0$ in a standard form. 
Then the colors of the arcs contained in 
the edges of $D_\Gamma^{(2)} \cup D_\Gamma^{(3)}$ 
are completely determined by $f_0$ 
and $g$ defined in the condition $(3)$ in Definition~\ref{def:colored_diagram}.  
\end{lemma}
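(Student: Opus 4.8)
The plan is to read off the colors of the edges in $D_\Gamma^{(2)}\cup D_\Gamma^{(3)}$ directly from the basis-loop identities of Definition~\ref{def:colored_diagram}\ref{item:basis-loop-identity}, using that in a standard form these edges are meridional and occupy a controlled position relative to the boundary loops $\alpha_j,\beta_j,\gamma_l$ of Figure~\ref{fig:loops}. Conceptually, each edge of $D_\Gamma^{(2)}\cup D_\Gamma^{(3)}$ is a meridian either of an edge of $B$ or of a boundary defect $p_l$, so its color is the value of $f_\ast$ on a loop that can be pushed into $\partial M$; such a value is pinned down by $f_0$ up to the single global conjugation recorded by $g$.

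First I would analyze the arc structure forced by the standard form. An edge of $D_\Gamma^{(3)}$ runs from the vertex $v_0$ to a boundary point $p_l$, and by the standard-form hypotheses it has no self-crossings and passes only over $D_B$; since an overstrand is never broken at a crossing, each such edge is a single arc $e_l$ with a single color $\col(e_l)$. For these I would invoke Definition~\ref{def:colored_diagram}\ref{item:basis-loop-identity} with $\lambda=\gamma_l$: the loop $\gamma_l$ is a meridian of $p_l$, so in standard position the only arc crossing over it is $e_l$, and the identity collapses to $\col(e_l)^{\pm1}=g\,[f_0(\gamma_l)]\,g^{-1}$. This determines every $D_\Gamma^{(3)}$ color from $f_0$ and $g$.

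Next I would treat the loops of $D_\Gamma^{(2)}$. Such a loop $\ell_j$ is hooked to the $j$-th edge of $B$, so in the diagram it meets that edge in two crossings---one where $\ell_j$ passes over $D_B$ and one (the hook) where it passes under---and the hook is, by the last standard-form condition, the unique place where $D_\Gamma$ goes under $D_B$. The under-crossing splits $\ell_j$ into arcs whose colors are related, via the Wirtinger relation of Figure~\ref{fig:wirtinger2}, by conjugation by the color $b_j$ of the arc of $D_B$ overpassing at the hook. I would pin $b_j$ itself using Definition~\ref{def:colored_diagram}\ref{item:basis-loop-identity} with $\lambda=\alpha_j$, the meridian of the $j$-th edge of $B$, and pin one arc of $\ell_j$ using $\lambda=\beta_j$, the longitude of that edge, which links $\ell_j$ once; the remaining arc is then recovered through the hook relation. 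Since $\alpha_j,\beta_j$ again lie in standard position, their identities involve only the arcs just named, so all colors on $\ell_j$ are expressed through $f_0$ and $g$.

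The main obstacle is the crossing bookkeeping in this last step: one must verify from the standard-form diagram (Figure~\ref{fig:standard_form}) exactly which arcs cross over each of $\alpha_j,\beta_j,\gamma_l$---with the correct signs $\varepsilon_\alpha$---so that the products in Definition~\ref{def:colored_diagram}\ref{item:basis-loop-identity} reduce to the single edge claimed and receive no stray contribution from $D_B$ or from the interior loops $D_\Gamma^{(1)}$, and that the vertex relation at $v_0$ is consistent with these assignments. This is precisely where the defining hypotheses of a standard form are used: the uniqueness of the hook for each edge of $B$, the uniqueness of the boundary edge to each $p_l$, and the requirement that away from the hooks $D_\Gamma$ overcrosses $D_B$ together guarantee that the boundary loops $\alpha_j,\beta_j,\gamma_l$ detect exactly the edges of $D_\Gamma^{(2)}\cup D_\Gamma^{(3)}$ and nothing else.
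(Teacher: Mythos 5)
Your overall strategy---reading the colors off the basis-loop identities of Definition~\ref{def:colored_diagram}(3) and propagating them through the Wirtinger relations---is the same germ as the paper's argument, but your central simplification is incorrect, and it is exactly at the step you yourself flag as ``the main obstacle.'' The loops $\alpha_j,\beta_j,\gamma_l$ are all \emph{based} at $\ast_\partial$, so their diagrammatic representatives travel along $D_B$ from the basepoint and pass under every intervening arc before reaching the feature they detect. Consequently the condition-(3) identity for such a loop is in general not a one-letter equation but a word in several colors: in the paper's model computation one gets $g[f_0(\beta_1)]g^{-1}=d_1^{-1}d_2d_1$ and $g[f_0(\gamma_1)]g^{-1}=d_1^{-1}d_2d_3^{-1}d_2^{-1}d_1$, and the $\beta_1$ identity determines a color only after substituting the Wirtinger relations at the hook ($c_2=d_1c_1d_1^{-1}$, $d_2=c_2^{-1}d_1c_2$), which reduce it to $c_1^{-1}d_1c_1$. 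So your claims that ``the only arc crossing over $\gamma_l$ is $e_l$,'' that the identity collapses to $\col(e_l)^{\pm1}=g[f_0(\gamma_l)]g^{-1}$, and that the $\alpha_j,\beta_j$ identities ``involve only the arcs just named'' are false for a diagram in standard form; among the basis loops only $\alpha_1$ yields a single-letter equation, $gf_0(\alpha_1)g^{-1}=c_1$. The missing idea is the \emph{inductive order} in which the paper solves this system: proceed from the basepoint of $D_B$, first determining the hooked-loop color from the $\alpha_j$ identity, then the adjacent colors from the $\beta_j$ identity after Wirtinger substitution, then each $\gamma_l$ identity---whose conjugating word by that stage consists entirely of colors already determined---and finally propagating the remaining colors by condition (2). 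Without this ordering, the assertion that the equations pin down the colors is never actually established.

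A secondary symptom of the unverified bookkeeping is that you have the roles of the basis loops swapped: you propose to use $\alpha_j$ to pin the color $b_j$ of the $D_B$ arc at the hook and $\beta_j$ to pin an arc of the hooked loop $\ell_j$, whereas in the standard form it is the $\alpha_j$ identity that directly gives the hooked-loop color $c_1$, and the $\beta_j$ identity that (through the relation $g[f_0(\beta_1)]g^{-1}=c_1^{-1}d_1c_1$) determines the neighboring color $d_1$; note also that determining the $D_B$ colors is not what the lemma asserts. Your treatment of $D_\Gamma^{(3)}$ is sound insofar as each such edge is a single arc (no self-crossings, only over-crossings with $D_B$ away from the hooks), but its color is recovered from the $\gamma_l$ identity only through the conjugating word, hence only after the induction. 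With the inductive solving order inserted, your outline becomes the paper's proof; as written, the key step is asserted via a simplification that fails.
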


\begin{proof}
The conditions (2) and (3) in \Cref{def:colored_diagram} provide equations in $G$ relating the arc colors of $D_\Gamma^{(2)} \cup D_\Gamma^{(3)} \cup D_B$. The idea is to proceed inductively from the basepoint of $D_B$, solving these equations for the arc colors.

Consider, for example, a part of a $G$-colored diagram 
in a standard form shown in Figure~\ref{fig:standard_form_coloring}. 
\begin{figure}[h]
    \centering
    \includegraphics[width=0.8\textwidth]{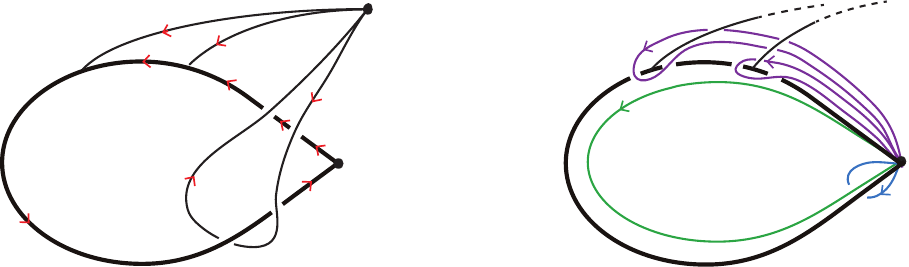}
\begin{picture}(400,0)(0,0)
\put(146,34){\color{red} $c_1$}
\put(29,23){\color{red} $c_2$}
\put(80,80){\color{red} $c_3$}
\put(108,72){\color{red} $c_4$}
\put(127,58){\color{red} $c_5$}
\put(148,62){\color{red} $c_6$}

\put(152,72){\color{red} $d_1$}
\put(88,40){\color{red} $d_2$}
\put(114,89){\color{red} $d_3$}
\put(85, 105){\color{red} $d_4$}

\put(355,30){\color{blue} $\alpha_1$}
\put(260,60){\color{teal} $\beta_1$}
\put(295,91){\color{violet} $\gamma_1$}
\put(260, 95){\color{violet} $\gamma_2$}
\end{picture}
    \caption{A part of a $G$-colored diagram in a 
    standard form.}
    \label{fig:standard_form_coloring}
\end{figure}
Then we have $g f_0(\alpha_1) g^{-1} = c_1$, which implies that $c_1$ is determined by $f_0$ and $g$.
Since $c_2 = d_1 c_1 d_1^{-1}$, we can write 
$d_2 = c_2^{-1} d_1 c_2 = d_1 c_1^{-1} d_1 c_1 d_1^{-1}$. 
Thus, we have $g [ f_0(\beta_1) ] g^{-1} = d_1^{-1} d_2 d_1 = c_1^{-1} d_1 c_1$, which implies that $d_1$ is determined by $f_0$ and $g$. 
The equality $g [ f_0(\beta_1) ] g^{-1} = d_1^{-1} d_2 d_1$ then implies that $d_2$ is also determined by $f_0$. 
Since we have $g [ f_0(\gamma_1) ] g^{-1} = d_1^{-1} d_2 d_3^{-1} d_2^{-1}d_1$, the color $d_3$ is determined by $f_0$ and $g$. 
Similarly, since $g [ f_0(\gamma_2) ] g^{-1} = 
d_1^{-1} d_2 d_3^{-1} d_4^{-1} d_3 d_2^{-1}d_1$, 
the color $d_4$ is also determined only by $f_0$ and $g$.
Now, the remaining colors $c_2, \ldots, c_6$ are determined 
from the other colors by the condition (2) in Definition~\ref{def:colored_diagram}. 
Consequently, all colors $c_1 \ldots, c_6, d_1, \ldots, d_4$ are determined by $f_0$. 
The general case can be proved in the same way with additional indices. 
\end{proof}

\begin{proof}[Proof of Theorem~\ref{thm:comb_classif}]
We first show the surjectivity of $\Psi$. 
Let $(H,N,[g]) \in \SS_{G,\Im(f_0)_\ast,N_{\beta\gamma}}$, that is, $[g] \in N\backslash G/C_G(\Im(f_0)_\ast)$, $g \Im(f_0)_\ast g^{-1}\subset H$, $g N_{\beta\gamma} g^{-1}\subset N$ and there exists a finite subset $T=\{c_1,\dots,c_m\}$ of $N$ satisfying $\ang{g \Im(f_0)_\ast g^{-1}\cup T}=H$ and $\angg{T}_H=N$.
We consider a (non-colored) diagram $D = D_B \cup D_\Gamma$ in a standard form such that 
the set $D_\Gamma^{(1)}$ consists of $n$ loops. 
Color the loops of $D_\Gamma^{(1)}$ by $\{c_1, \ldots, c_m\}$, respectively. 
By the arguments of Lemma~\ref{lem:colors of a diagram coming from the boudnary}, 
we can define the colors of the edges of 
$D_\Gamma^{(2)} \cup D_\Gamma^{(3)}$ automatically 
from $f_0$ and $g$. 
Then, the resulting $G$-colored diagram $D$ satisfies $\ang{\col(\alpha)\mid \alpha\in \A_{D}}=H$ and $\angg{\{\col(\alpha)\mid \alpha\in \A_{D_\Gamma}\}\cup T}_{H}=N$.
Thus $D$ lies in the preimage $\Psi^{-1}(H,N,[g])$.

\begin{figure}[htbp]
\centering\includegraphics[width=0.5\textwidth]{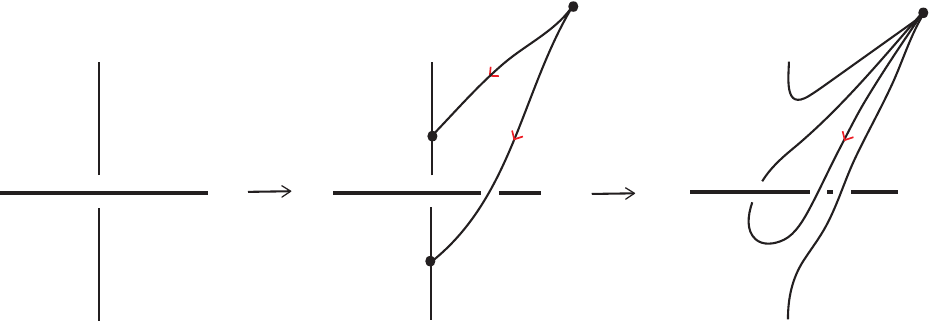}
\begin{picture}(400,0)(0,0)
\put(200,72){\color{red} $1$}
\put(217,50){\color{red} $1$}
\end{picture}
\caption{Adding two edges colored with $1$ from the unique vertex to arcs crossing under $D_B$, and then contracting them.}
\label{fig:hook1}
\end{figure}
\begin{figure}[htbp]
\centering\includegraphics[width=0.75\textwidth]{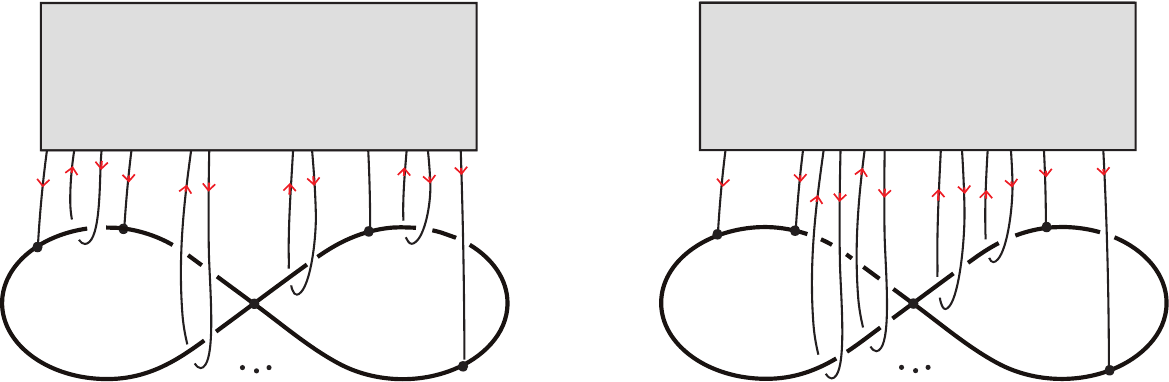}
\begin{picture}(400,0)(0,0)
%\put(69,41){\color{red} $c_1$}
\end{picture}
\caption{Simplification of colored graphs in the neighborhood of the $k$-rose $B$.}
\label{fig:hook2}
\end{figure}

It remains to show that $\Psi$ is injective. 
Choose $(H,N,[g]) \in \SS_{G,\Im(f_0)_\ast,N_{\beta\gamma}}$.
Let $D = D_B \cup D_\Gamma$ be an arbitrary $G$-colored diagram rel $f_0$ in the preimage $\Psi^{-1}(H,N,[g])$. 
We show that by a sequence of local moves (1)--(5) for 
$D = D_B \cup D_\Gamma$ introduced in 
Section~\ref{sec:Mathematical model}, 
$D$ is transformed into a standard form. 
First, by adding edges colored by $1$ appropriately, we 
can assume that a graph $\Gamma$ corresponding to $D_\Gamma$ is connected. 
Then, by contracting the edges of a maximal tree of $\Gamma$, 
we can assume that $\Gamma$ has a single vertex, where the image of the vertex is on the top of the diagram. 
Furthermore, by applying the operations depicted in Figure~\ref{fig:hook1} to each crossing where a strand of $D_\Gamma$ passes under a strand of $D_B$, we obtain a diagram like that shown on the left of Figure~\ref{fig:hook2}, or more precisely, a diagram satisfying the following conditions: 
\begin{itemize}
\item 
outside of the shaded box, $D_\Gamma$ has no self-crossings; 
\item
for each point of $D_B$ corresponding to one of the boundary defects $p_1, \ldots, p_n$, there is a portion of an edge of $D_\Gamma$ extending from the shaded box to that point; 
\item
each portion of $D_\Gamma$'s edges outside of the shaded box, apart from those arcs described above, is hooked to an edge of $D_B$ so that 
if we contract the shaded box to a point, that portion represents a loop that is freely homotopic to the meridian of the corresponding edge of $D_B$; and 
\item
at every crossing of $D_B$ and $D_\Gamma$, except for the hooks mentioned above, $D_\Gamma$ crosses over $D_B$.
\end{itemize}

\begin{figure}[htbp]
\centering\includegraphics[width=0.9\textwidth]{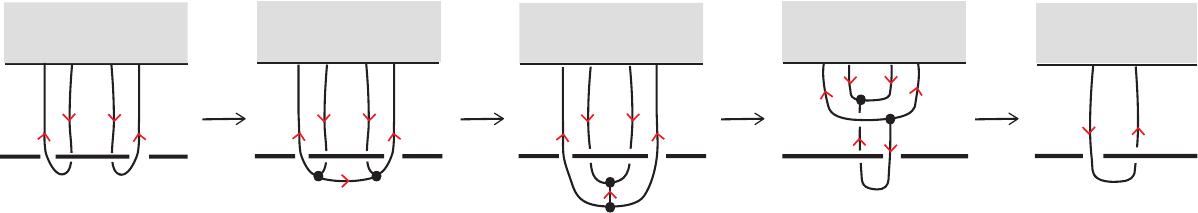}
\begin{picture}(400,0)(0,0)
\end{picture}
\caption{A sequence of moves that reduces the number of loops hooked to an edge of $D_B$.}
\label{fig:hook3}
\end{figure}

By the moves (1)--(5) in \Cref{def:moves}, the images of the loops of $\Gamma$ hooked to each edge of $D_B$ can be collected in one place as shown on the right in Figure~\ref{fig:hook2}. 
Then, after applying the moves shown in Figure~\ref{fig:hook3} finitely many times, we may assume that for each edge of $D_B$, there exists a unique loop of $D_\Gamma$ that is hooked to that edge. 

Now, the diagram is in a standard form outside of the grayed-out box. 
The remaining argument to get a standard form is similar to that of
\textcite{NKTK24}. 
Indeed, the move shown in Figure~\ref{fig:crossing_contraction} replaces a crossing with a vertex; thus, we can assume that the diagram $D_\Gamma$ has no self-crossings. 
By contracting edges to get a diagram with a single vertex, 
we finally obtain a diagram in a standard form. 

\begin{figure}[htbp]
\centering\includegraphics[width=0.8\textwidth]{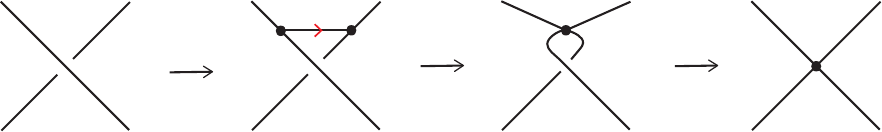}
\begin{picture}(400,0)(0,0)
\end{picture}
\caption{A sequence of moves changing a crossing to a vertex.}
\label{fig:crossing_contraction}
\end{figure}

Let $D = D_B \cup D_\Gamma$ and $D'= D_B \cup D_{\Gamma'}$ 
be $G$-colored diagrams in standard form satisfying $\Psi(D) = \Psi(D') = (H,N,[g])$.
Let $g$ and $g'$ be elements of $G$ defined in the condition (3) in Definition~\ref{def:colored_diagram} for $D$ and $D'$, respectively.
Then, $hgs=g'$ for some $h\in N$ and $s\in C_G(\Im(f_0)_\ast)$.
Figure~\ref{fig:conjugate} enables us to create a loop colored by $h$.
Following Figure~\ref{fig:returning}, we can transform $D'$ to another standard form $D''$ such that $g''=h^{-1}g'=gs$.

Clearly, the number of edges of 
$D_\Gamma^{(2)}$ and $D_{\Gamma'}^{(2)}$ as well as 
$D_\Gamma^{(3)}$ and $D_{\Gamma'}^{(3)}$ are the same. 
Further, by Lemma~\ref{lem:colors of a diagram coming from the boudnary}, the colors of the arcs in 
$D_\Gamma^{(2)} \cup D_{\Gamma'}^{(2)}$ and 
$D_\Gamma^{(3)} \cup D_{\Gamma'}^{(3)}$ are automatically 
determined by $f_0, g$ and $f_0, gs$, respectively. 
Since $s\in C_G(\Im(f_0)_\ast)$, it does not affect this process. 
Finally, since $N_D=N=N_{D'}$, we can deform $D_{\Gamma'}^{(1)}$ into $D_\Gamma^{(1)}$ using \cite[Figure~16]{NKTK24}.
Therefore, $D=D'$ up to the moves (1)--(5).
\end{proof}

\begin{figure}[h]
\centering\includegraphics[width=0.9\textwidth]{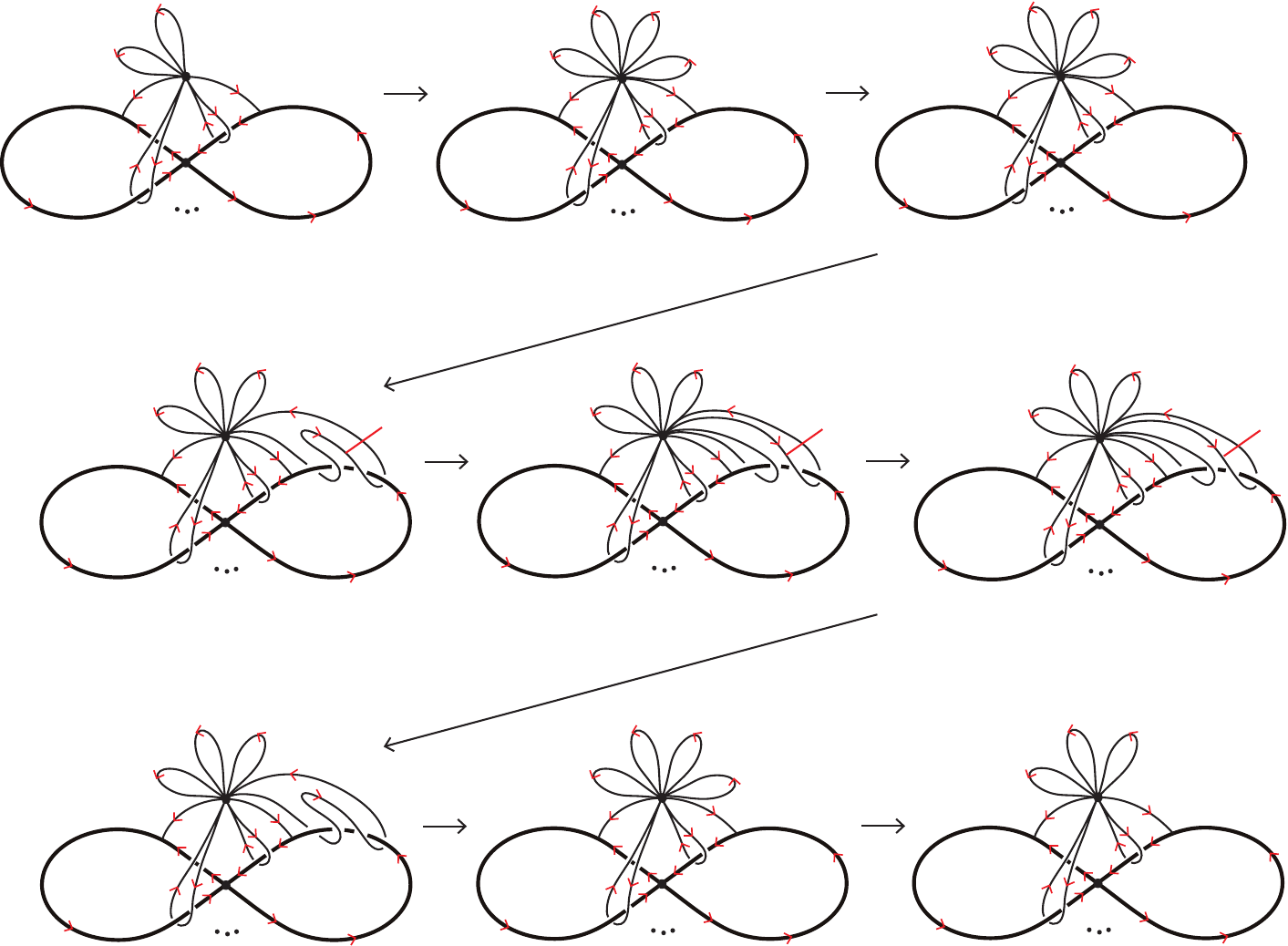}

\begin{picture}(400,0)(0,0)
\put(203,293){\color{red} $1$}
\put(217,273){\color{red} $1$}
\put(334,293){\color{red} $1$}
\put(348,273){\color{red} $c$}
\put(120,253){\color{red} $d$}
\put(250,253){\color{red} $d$}
\put(380,253){\color{red} $d$}

\put(84,186){\color{red} $1$}
\put(124,165){\color{red} $d^{-1} c d$}
\put(94,175){\color{red} $c$}
\put(255,165){\color{red} $d^{-1} c d$}
\put(214,186){\color{red} $1$}
\put(345,186){\color{red} $d^{-1} c d$}
\put(225,175){\color{red} $c$}
\put(386,165){\color{red} $d^{-1} c d$}
\put(355,175){\color{red} $c$}
\put(132,145){\color{red} $d$}
\put(262,145){\color{red} $d$}
\put(392,145){\color{red} $d$}

\put(84,79){\color{red} $d^{-1} c d$}
\put(94,67){\color{red} $c$}
\put(214,79){\color{red} $d^{-1} c d$}
\put(230,60){\color{red} $c$}
\put(345,79){\color{red} $d^{-1} c d$}
\put(132,36){\color{red} $d$}
\put(262,36){\color{red} $d$}

\end{picture}
\caption{A sequence of moves creating a loop colored by $d^{-1}cd$, where $c\in \{\col(\alpha)\mid \alpha\in \A_{D_{\Gamma'}}\}$ and $d\in H=\ang{\col(\alpha)\mid \alpha\in \A_{D_B\cup D_{\Gamma'}}}$.
Here, the second and eighth (resp.\ fifth) moves follow from Figure~16 (resp.\ Figure~13) in \cite{NKTK24}.}
\label{fig:conjugate}
\end{figure}

\begin{figure}[h]
\centering\includegraphics[width=1\textwidth]{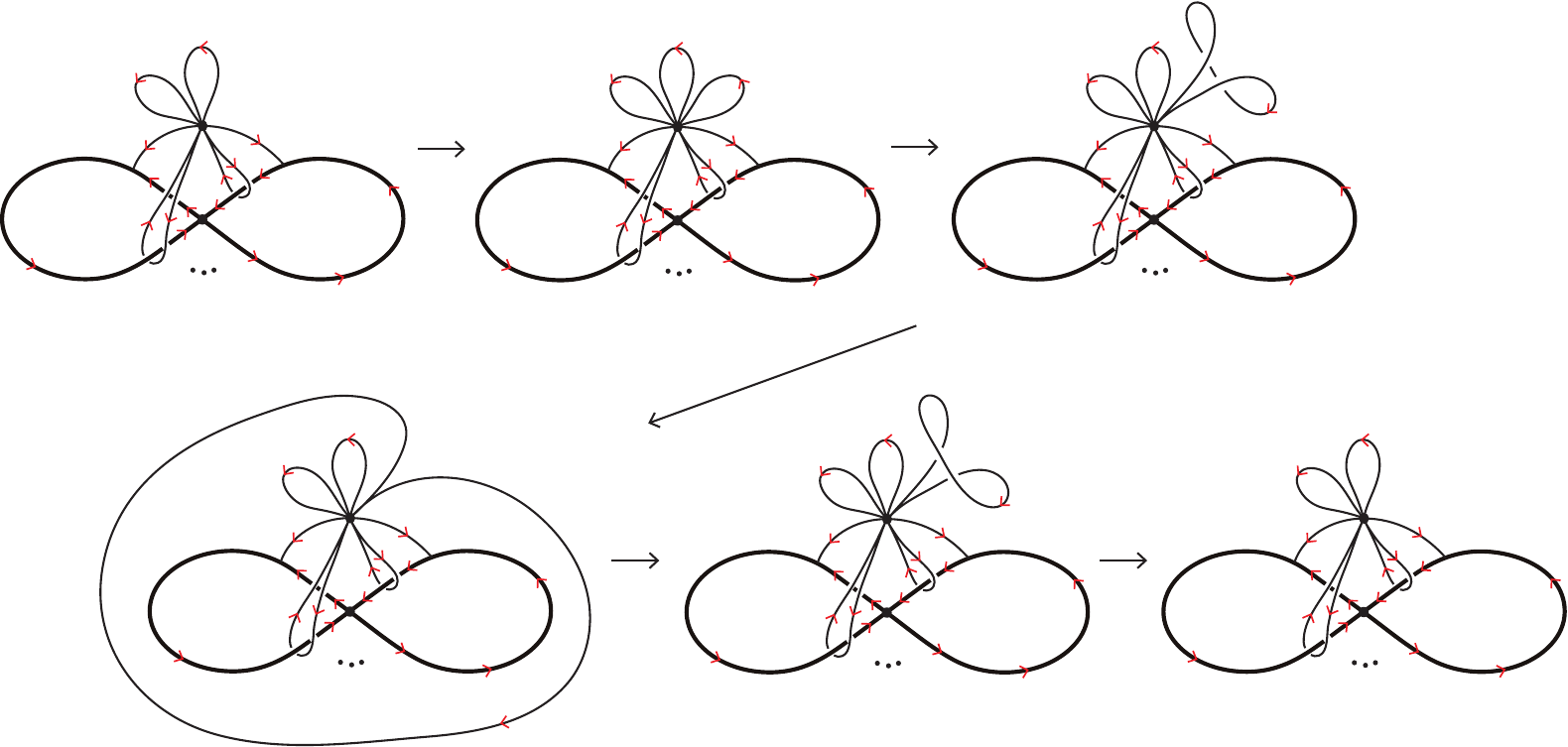}
\begin{picture}(400,0)(0,0)
\put(28,125){$(D', g')$}
\put(192,193){\color{red} $h$}
\put(337,180){\color{red} $h$}
\put(123,5){\color{red} $h$}
\put(263,73){\color{red} $h$}
\put(334,0){$(D'', h^{-1}g')$}
\end{picture}
\caption{A sequence of moves changing $g'$ to $h^{-1}g'$ in Definition~\ref{def:colored_diagram}(3).}
\label{fig:returning}
\end{figure}

%%%%
\appendix

%%%
\section{Classification example: a system on the solid torus with order parameter space $S^3 / Q$ and no boundary defects}%\texorpdfstring{}
\label{sec: Classification example}
Let $X_G = S^3 /Q$, $k = 1$ and $P = \emptyset$, that is, we are considering 
a system defined on the solid torus $M=D^2 \times S^1$ that has 
$S^3 / Q$ as its order parameter space and no boundary defects. 
Then $\pi_1 (\partial M , \ast_\partial )$ is the free abelian group of 
rank two generated by the loops $\alpha = \alpha_1$ and $\beta = \beta_1$.  
Let $f_0\colon \partial M \to S^3 / Q$ satisfy $(f_0)_* (\alpha) = i \in Q$. 
We can classify the equivalence classes of global defect configurations with 
the boundary condition $f_0$ as follows. 

Suppose first that $(f_0)_* (\beta) = 1$. 
Then, we have 
$\SS_{G, \Im (f_0)_*, N_{\beta \gamma}} = \SS_{Q, \langle i \rangle, \{1\}}$. 
As we have seen in Example \ref{example: triple}, 
this set consists of nine elements, and from this, 
we can check easily that $\SS_{Q, \langle i \rangle, \{1\}} /\conj$ consists of six elements (see \Cref{cor:comp_classif_conjugacy}). 
The element $(\langle i \rangle, \{ 1 \}, [1] )$ corresponds to the empty defect, 
and we can choose a representative of the equivalence class of colored diagrams corresponding to each of the other five elements as shown in 
Figure~\ref{fig:example_classification1}. 
\begin{figure}[htbp]
\centering\includegraphics[width=0.95\textwidth]{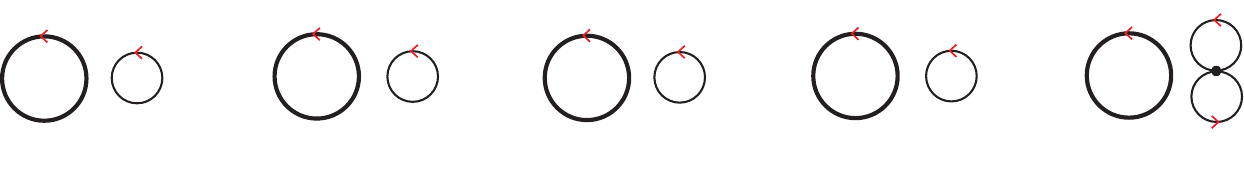}
\begin{picture}(400,0)(0,0)

\put(40,59){\footnotesize \color{red}$-1$}
\put(130,59){\footnotesize \color{red}$i$}
\put(218,60){\footnotesize \color{red}$j$}
\put(306,59){\footnotesize \color{red}$k$}
\put(392,71){\footnotesize \color{red}$j$}
\put(392,22){\footnotesize \color{red}$k$}

\put(10,65){\footnotesize \color{red}$i$}
\put(100,65){\footnotesize \color{red}$i$}
\put(188,65){\footnotesize \color{red}$i$}
\put(275,65){\footnotesize \color{red}$i$}
\put(364,65){\footnotesize \color{red}$i$}

\put(-4,0){$(\langle  i \rangle, \langle  -1 \rangle, [1])$}
\put(91,0){$(\langle  i \rangle, \langle  i \rangle, [1])$}
\put(179,0){$(Q , \langle j \rangle, [1])$}
\put(263,0){$(Q, \langle  k \rangle , [1])$}
\put(355,0){$(Q,Q,[1])$}

\put(26,18){$\updownarrow$}
\put(112,18){$\updownarrow$}
\put(197,18){$\updownarrow$}
\put(284,18){$\updownarrow$}
\put(373,18){$\updownarrow$}

\end{picture}
\caption{Representatives of 
the equivalence classes of colored diagrams corresponding to non-trivial elements of 
$\SS_{Q, \langle i \rangle, \{1\}} /\conj$ in the case where 
$(f_0)_* (\beta) = 1$.}
\label{fig:example_classification1}
\end{figure}

Next, suppose that $(f_0)_* (\beta) = -1$. 
Then, we have 
$\SS_{G, \Im (f_0)_*, N_{\beta \gamma}} = \SS_{Q, \langle i \rangle, \langle -1 \rangle}$. 
As we have seen in Example~\ref{example: triple}, 
this set consists of seven elements, and from this, 
we see that $\SS_{Q, \langle i \rangle, \langle -1 \rangle} /\conj$ consists of five elements, where 
a representative of the equivalence class of colored diagrams corresponding to each of those elements can be seen in Figure~\ref{fig:example_classification2}.

\begin{figure}[htbp]
\centering\includegraphics[width=0.95\textwidth]{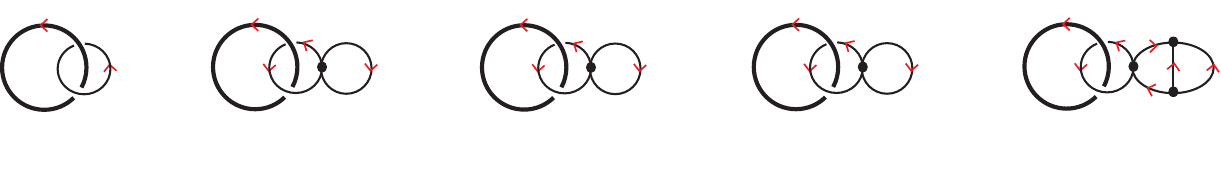}
\begin{picture}(500,0)(0,0)

\put(23,65){\footnotesize \color{red}$i$}
\put(51,44){\footnotesize \color{red}$-1$}

\put(86,44){\footnotesize \color{red}$-1$}
\put(138,44){\footnotesize \color{red}$i$}
\put(108,58){\footnotesize \color{red}$-1$}
\put(93,65){\footnotesize \color{red}$i$}

\put(175,44){\footnotesize \color{red}$-1$}
\put(227,45){\footnotesize \color{red}$j$}
\put(197,58){\footnotesize \color{red}$-1$}
\put(186,65){\footnotesize \color{red}$i$}

\put(266,44){\footnotesize \color{red}$-1$}
\put(317,45){\footnotesize \color{red}$k$}
\put(287,58){\footnotesize \color{red}$-1$}
\put(275,65){\footnotesize \color{red}$i$}

\put(364,65){\footnotesize \color{red}$i$}
\put(376,58){\footnotesize \color{red}$-1$}
\put(356,44){\footnotesize \color{red}$-1$}
\put(391,58){\footnotesize \color{red}$i$}
\put(391,30){\footnotesize \color{red}$i$}
\put(403,45){\footnotesize \color{red}$j$}
\put(416,44){\footnotesize \color{red}$k$}

\put(1,0){$(\langle  i \rangle, \langle  -1 \rangle, [1])$}
\put(83,0){$(\langle  i \rangle , \langle i \rangle, [1])$}
\put(175,0){$(Q, \langle j \rangle , [1])$}
\put(265,0){$(Q,\langle k \rangle , [1])$}
\put(360,0){$(Q, Q , [1])$}

\put(26,18){$\updownarrow$}
\put(105,18){$\updownarrow$}
\put(195,18){$\updownarrow$}
\put(285,18){$\updownarrow$}
\put(378,18){$\updownarrow$}

\end{picture}
\caption{Representatives of 
the equivalence classes of colored diagrams corresponding to elements of 
$\SS_{Q, \langle i \rangle, \langle -1 \rangle } /\conj$ in the case where 
$(f_0)_* (\beta) = -1$.}
\label{fig:example_classification2}
\end{figure}

Finally, suppose that $(f_0)_* (\beta) = \pm i$. 
In this case, we have 
$\SS_{G, \Im (f_0)_*, N_{\beta \gamma}} = \SS_{Q, \langle i \rangle, \langle i \rangle}$. 
As we have seen in Example \ref{example: triple}, 
this set consists of three elements, and from this, 
we see that $\SS_{Q, \langle i \rangle, \{1\}} /\conj$ consists of 
two elements. 
A representative of the equivalence class of colored diagrams corresponding to each of those elements for $(f_0)_* (\beta) = \pm i$ can be seen in Figure~\ref{fig:example_classification3}.  
\begin{figure}[htbp]
\centering\includegraphics[width=0.80\textwidth]{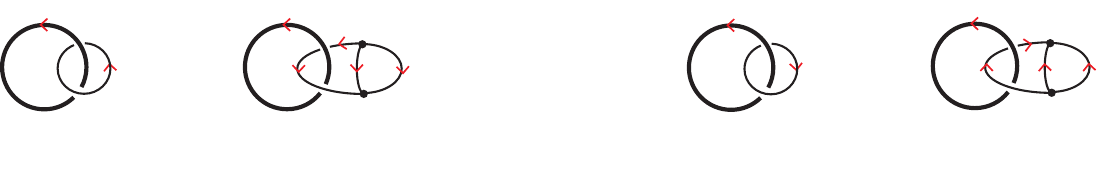}
\begin{picture}(400,0)(0,0)

\put(42,67){\footnotesize \color{red}$i$}
\put(67,47){\footnotesize \color{red}$i$}

\put(118,67){\footnotesize \color{red}$i$}
\put(116,48){\footnotesize \color{red}$i$}
\put(143,47){\footnotesize \color{red}$k$}
\put(158,48){\footnotesize \color{red}$j$}
\put(134,60){\footnotesize \color{red}$i$}

\put(255,67){\footnotesize \color{red}$i$}
\put(280,47){\footnotesize \color{red}$i$}

\put(331,67){\footnotesize \color{red}$i$}
\put(330,48){\footnotesize \color{red}$i$}
\put(357,48){\footnotesize \color{red}$j$}
\put(371,47){\footnotesize \color{red}$k$}
\put(347,60){\footnotesize \color{red}$i$}

\put(30,0){$(\langle  i \rangle, \langle  i \rangle, [1])$}
\put(113,0){$(Q, Q , [1])$}

\put(50,18){$\updownarrow$}
\put(129,18){$\updownarrow$}

\put(238,0){$(\langle  i \rangle, \langle  i \rangle, [1])$}
\put(322,0){$(Q, Q , [1])$}

\put(256,18){$\updownarrow$}
\put(338,18){$\updownarrow$}

\end{picture}
\caption{Representatives of 
the equivalence classes of colored diagrams corresponding to elements of 
$\SS_{Q, \langle i \rangle, \langle i \rangle } /\conj$ in the case where 
$(f_0)_* (\beta) = i$ (left) or $(f_0)_* (\beta) = -i$ (right).}
\label{fig:example_classification3}
\end{figure}

Note that the list of representatives of the colored diagrams obtained above is exactly the one we introduced in Section~\ref{sec:intro} with Figure~\ref{fig:example_classification_intro}. 
By Corollary~\ref{cor:comp_classif_conjugacy}, 
this gives a complete classification, 
up to free homotopy, of non-trivial defect configurations of a system defined on the solid torus $M$ with order parameter space $S^3 / Q$ and no boundary defects, subject to the boundary condition 
$f_0\colon \partial M \to S^3 / Q$ with $(f_0)_* (\alpha) = i$.

Obviously, there are infinitely many colored diagrams 
in each equivalence class of colored diagrams. 
For example, the equivalence class of colored diagrams 
corresponding to the element $(\langle i \rangle, \langle i \rangle, [1])$ of 
$\SS_{Q, \langle i \rangle , \langle -1\rangle} /\conj$ contains the colored diagrams depicted in Figure~\ref{fig:example_equivalence1}. 
\begin{figure}[htbp]
\centering\includegraphics[width=0.7\textwidth]{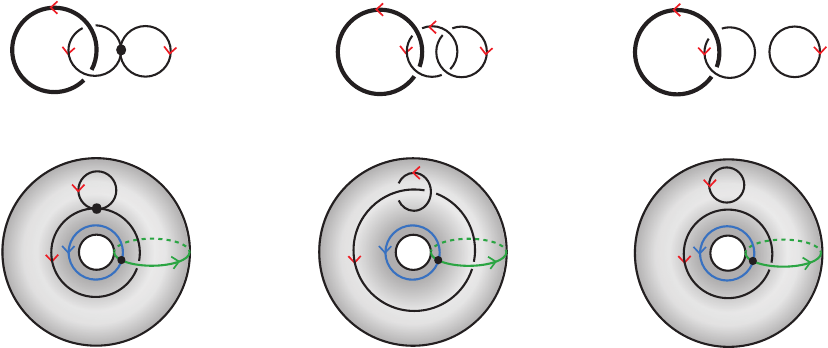}
\begin{picture}(400,0)(0,0)

\put(140,117){$\sim$}
\put(254,117){$\sim$}

\put(62,117){\footnotesize \color{red}$-1$}
\put(116,117){\footnotesize \color{red}$i$}
\put(85,131){\footnotesize \color{red}$-1$}
\put(183,117){\footnotesize \color{red}$-1$}
\put(230,117){\footnotesize \color{red}$i$}
\put(203,131){\footnotesize \color{red}$-1$}
\put(291,117){\footnotesize \color{red}$-1$}
\put(351,117){\footnotesize \color{red}$i$}

\put(69,139){\footnotesize \color{red}$i$}
\put(187,139){\footnotesize \color{red}$i$}
\put(294,139){\footnotesize \color{red}$i$}

\put(84,91){$\updownarrow$}
\put(196,91){$\updownarrow$}
\put(310,91){$\updownarrow$}

\put(73,68){\footnotesize {\color{red} $i$}}
\put(205,74){\footnotesize {\color{red} $i$}}
\put(301,68){\footnotesize {\color{red} $i$}}

\put(56,42){\footnotesize {\color{red} $-1$}}
\put(166,42){\footnotesize {\color{red} $-1$}}
\put(284,42){\footnotesize {\color{red} $-1$}}

\put(81,45){\footnotesize {\color{blue} $\alpha$}}
\put(196,45){\footnotesize {\color{blue} $\alpha$}}
\put(309,45){\footnotesize {\color{blue} $\alpha$}}

\put(122,44){\footnotesize {\color{teal} $\beta$}}
\put(237,44){\footnotesize {\color{teal} $\beta$}}
\put(350,44){\footnotesize {\color{teal} $\beta$}}

\end{picture}
\caption{Some equivalent colored diagrams corresponding to the element $(\langle i \rangle, \langle i \rangle, [1])$ of $\SS_{Q, \langle i \rangle , \langle -1\rangle} /\conj$.}
\label{fig:example_equivalence1}
\end{figure}
Similarly, 
the three colored diagrams depicted in Figure \ref{fig:example_equivalence2} 
belong to the same equivalence class because they all correspond to 
the same element $(Q, Q, [1])$ of 
$\SS_{Q, \langle i \rangle , \langle i\rangle} /\conj$. 
\begin{figure}[htbp]
\centering\includegraphics[width=0.70\textwidth]{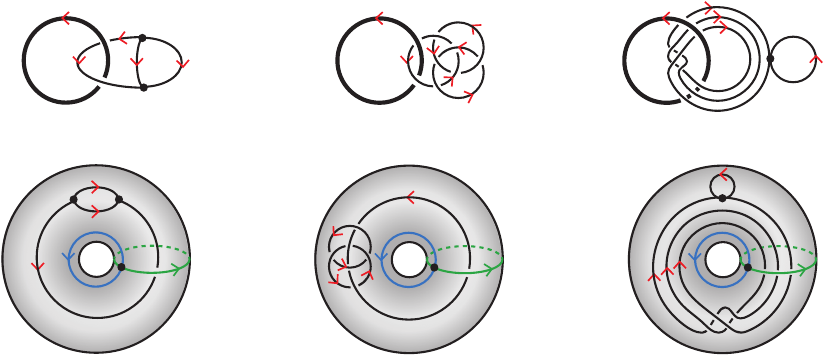}
\begin{picture}(400,0)(0,0)
\put(142,117){$\sim$}
\put(252,117){$\sim$}

\put(73,117){\footnotesize \color{red}$i$}
\put(103,117){\footnotesize \color{red}$k$}
\put(120,117){\footnotesize \color{red}$j$}
\put(192,117){\footnotesize \color{red}$i$}
\put(226,132){\footnotesize \color{red}$j$}
\put(225,101){\footnotesize \color{red}$k$}
\put(288,117){\footnotesize \color{red}$i$}
\put(350,117){\footnotesize \color{red}$j$}

\put(73,138){\footnotesize \color{red}$i$}
\put(188,138){\footnotesize \color{red}$i$}
\put(293,138){\footnotesize \color{red}$i$}

\put(84,90){$\updownarrow$}
\put(196,90){$\updownarrow$}
\put(310,90){$\updownarrow$}

\put(89,76){\footnotesize {\color{red} $j$}}
\put(78,58){\footnotesize {\color{red} $k$}}
\put(173,62){\footnotesize {\color{red} $j$}}
\put(172,30){\footnotesize {\color{red} $k$}}
\put(320,72){\footnotesize {\color{red} $j$}}

\put(57,42){\footnotesize {\color{red} $i$}}
\put(198,72){\footnotesize {\color{red} $i$}}
\put(281,38){\footnotesize {\color{red} $i$}}

\put(81,45){\footnotesize {\color{blue} $\alpha$}}
\put(196,45){\footnotesize {\color{blue} $\alpha$}}
\put(309,45){\footnotesize {\color{blue} $\alpha$}}

\put(122,44){\footnotesize {\color{teal} $\beta$}}
\put(237,44){\footnotesize {\color{teal} $\beta$}}
\put(350,44){\footnotesize {\color{teal} $\beta$}}
\end{picture}
\caption{Some equivalent colored diagrams corresponding to the element $(Q, Q, [1])$ of 
$\SS_{Q, \langle i \rangle , \langle i\rangle} /\conj$.}
\label{fig:example_equivalence2}
\end{figure}
Using Corollary~\ref{cor:comp_classif_conjugacy}, we can determine whether two diagrams of defect configurations are equivalent by simply comparing the corresponding elements of $\SS_{G, \Im (f_0)_*, N_{\beta \gamma}} / \conj$, which are easily computed from the diagrams.

%%%
\section{The subgroups of the binary octahedral group $\BOct$}%\texorpdfstring{}
\label{sec: The subgroups of the binary octahedral group}

The following is the list of all subgroups of $\BOct$. 

\begin{itemize}
\item Subgroups of $Q$. 
\begin{enumerate}[label=(\roman*)]
\item $\{ 1 \} = \langle 1 \rangle$. 
\item $\langle -1 \rangle \cong \Z / 2 \Z$. 
\item $\langle i \rangle  \cong \Z / 4 \Z$, $\langle j \rangle$, $\langle k \rangle$.
\item \textbf{(order $8$)} $Q = \langle i,j \rangle$. 
\end{enumerate}
\item Subgroups of $\BTet$ not contained in $Q$. 
\begin{enumerate}[label=(\roman*),resume]
\item $\langle c^2 \rangle = \{ 1, c^2, c^^4 = -c \} \cong \Z / 3 \Z$, 
$\langle \alpha^2 \rangle$, $\langle \beta^2 \rangle$, $\langle \gamma^2 \rangle$. 
\item $\langle c \rangle = \{ \pm 1, \pm c, \pm c^2 \}  \cong \Z / 6 \Z$, 
$\langle \alpha \rangle$, 
$\langle \beta \rangle$, $\langle \gamma \rangle$. 
\item 
$\BTet= \langle i , c \rangle $. 
\end{enumerate}
\item Subgroups of $\BOct$ not contained in $\BTet$.
\begin{enumerate}[label=(\roman*),resume]
\item 
$\left\langle \frac{1}{\sqrt{2}} (i + j) \right\rangle = 
\left\{ \pm 1, \pm \frac{1}{\sqrt{2}} (i+j) \right\} \cong \Z / 4 \Z$, 
$\left\langle \frac{1}{\sqrt{2}} (j + k) \right\rangle$, 
$\left\langle \frac{1}{\sqrt{2}} (k + i) \right\rangle$, 
$\left\langle \frac{1}{\sqrt{2}} (i - j) \right\rangle  = 
\left\{ \pm 1, \pm \frac{1}{\sqrt{2}} (i-j) \right\}$, 
$\left\langle \frac{1}{\sqrt{2}} (j - k) \right\rangle$, 
$\left\langle \frac{1}{\sqrt{2}} (k - i) \right\rangle$. 
\item 
$\left\langle \frac{1}{\sqrt{2}} (1 + i) \right\rangle = 
\left\{ \pm 1, \pm i, \frac{1}{\sqrt{2}} (\pm 1, \pm i) \right\}$, $\left\langle \frac{1}{\sqrt{2}} (1 + j) \right\rangle  \cong \Z / 8 \Z$, 
$\left\langle \frac{1}{\sqrt{2}} (1 + k) \right\rangle$. 
\item  
$Q_8(i) \defeq \left\langle i, \frac{1}{\sqrt{2}} (j+k) \right\rangle = 
\left\{ \pm 1, \pm i , \frac{1}{\sqrt{2}} (j+k) , \pm \frac{1}{\sqrt{2}} (j-k) \right\} \cong Q \cong D_2^*$, 
$Q_8(j) \defeq \left\langle j, \frac{1}{\sqrt{2}} (k+i) \right\rangle$, 
$Q_8(k) \defeq \left\langle k, \frac{1}{\sqrt{2}} (i+j) \right\rangle$.
\item 
$Q_{12}(c) \defeq \left\langle c, \frac{1}{\sqrt{2}} (i-j) \right\rangle = 
\left\{ \pm 1, \pm c , \pm c^2, 
\pm \frac{1}{\sqrt{2}} (i-j) , \pm \frac{1}{\sqrt{2}} (j-k), 
\pm \frac{1}{\sqrt{2}} (k-i) \right\} \cong D_3^*$, 
$Q_{12} (\alpha) \defeq \left\langle \alpha, \frac{1}{\sqrt{2}} (i+j) \right\rangle = 
\left\{ \pm 1, \pm \alpha , \pm \alpha^2, 
\pm \frac{1}{\sqrt{2}} (i+j) , \pm \frac{1}{\sqrt{2}} (j-k), \pm \frac{1}{\sqrt{2}} (k+i) \right\}$, 
$Q_{12} (\beta)$, $Q_{12} (\beta)$.
\item 
$Q_{16}(i) \defeq \left\langle i, \frac{1}{\sqrt{2}} (1+k) \right\rangle = 
Q \cup \left\{ \frac{1}{\sqrt{2}} ( \pm i \pm j), 
 \pm \frac{1}{\sqrt{2}} ( \pm 1 \pm k) 
\right\} \cong D_4^*$, 
$Q_{16}(j)$, $Q_{16}(k)$.
\item  
$\BOct = \left\langle c, \frac{1}{\sqrt{2}} (i+j) \right\rangle$. 
\end{enumerate}
\end{itemize}
The subgroups within each item of the above list are conjugate to each other. 
Thus, only $\{1\}$, $\langle -1 \rangle$, $Q$, $\BTet$ and $\BOct$ are normal subgroups of $\BOct$, 
and there are totally thirteen conjugacy classes of subgroups.

%\bibliographystyle{abbrv}%alpha
%\bibliography{Defect-with-boundary}
\printbibliography

\end{document}